\documentclass[11pt]{article}
\usepackage{style}
\usepackage{physics}
\usepackage[margin=1in]{geometry}
\usepackage{amsmath,amssymb,amsthm,mathtools}
\usepackage{bbm}
\usepackage{enumitem}

\newcommand{\Sep}{\operatorname{Sep}}

\newcommand{\Frob}{\operatorname{F}}
\newcommand{\ot}{\otimes}
\renewcommand{\ip}[2]{\langle{#1},{#2}\rangle}

\title{Quantum Separability in Polynomial Time}
\author{Giulio Malavolta\thanks{Bocconi University \href{mailto:giulio.malavolta@unibocconi.it}{giulio.malavolta@unibocconi.it}}}
\date{}

\begin{document}

\maketitle

\begin{abstract}
The quantum separability problem asks whether a bipartite density matrix is separable or is $\eta$-far from every separable state. We give a randomized polynomial-time algorithm for this problem for every fixed constant gap $\eta>0$, when distance is measured in the Euclidean norm.
\end{abstract}

\section{Introduction}

 Entanglement is the fundamental property distinguishing quantum information from classical probability, and it underpins uniquely quantum phenomena such as violation of Bell inequalities. A central question in computer science and physics is to characterize entanglement in quantum states.

A bipartite density matrix $\rho\in L(\mathbb C^d\otimes \mathbb C^d)$ is \emph{separable} if it lies in the convex hull of pure product states $\Sep(d,d)$,
and is otherwise \emph{entangled}. Deciding whether a given density matrix lies in $\Sep(d,d)$ is the \emph{quantum separability problem}. Peres \cite{Peres1996Separability} showed that separability implies positivity under partial transposition, and Horodecki, Horodecki, and Horodecki \cite{Horodecki1996Separability} proved that this criterion is also sufficient in $2\times 2$ and $2\times 3$ dimensions, but not in general. This initiated a broad study of separability criteria \cite{Donald2002Uniqueness,Bennett1996MixedState,Vedral1997Quantifying,Vidal2002Computable}.

However, strong barriers are known for a \emph{computationally efficient} criteria: Gurvits \cite{Gurvits2003Edmonds} proved that \emph{weak membership} for the set of separable states is NP-hard. That is, he showed NP-hardness of determining membership in $\Sep(d,d)$ with accuracy $\eta = \exp(-d)$, in Euclidean distance. This was later improved by Gharibian \cite{Gharibian2010StrongNPHardness} to \emph{strong} NP-hardness, with accuracy $\eta = \frac{1}{\poly(d)}$.  Harrow and Montanaro \cite{Harrow2013TestingProductStates} showed evidence of computational hardness even in the \emph{constant} error regime for separability-type problems, but in the trace-norm scale.

On the algorithmic side, the semidefinite programming hierarchy of Doherty, Parrilo, and Spedalieri \cite{Doherty2004CompleteFamily} gives a complete family of separability criteria by searching for symmetric extensions. A breakthrough result of Brandão, Christandl, and Yard \cite{Brandao2011Quasipolynomial} proves quantitative convergence bounds for such a hierarchy, resulting into a quasi-polynomial algorithm for the weak membership problem, for \emph{constant} accuracy $\eta$. Their result applies both the Euclidean norm and the LOCC norm, a relevant quantity in quantum information. Subsequently, Shi and Wu \cite{Shi2011EpsilonNet} achieve similar results with a different algorithmic approach, based on $\varepsilon$-nets.

A long-standing open problem is whether there exists a polynomial-time algorithm for quantum state separability, in the constant-accuracy regime.

\subsection{Our Results}

This paper resolves the constant-gap Euclidean weak-membership problem. For every fixed constant $\eta>0$, we give a randomized algorithm running in time polynomial in $d$ that distinguishes the cases
\[
\rho \in \Sep(d,d) \quad \text{and} \quad \min_{\psi \in \Sep(d,d)}\norm{\rho - \psi}_{\Frob} > \eta
\]
where $\norm{\cdot}_{\Frob}$ is the Euclidean/Frobenius norm. Besides the weak membership problem, we record some other useful consequences of this algorithm.

The \emph{Best Separable State} (BSS) problem asks, given a Hermitian operator $M\in L(\mathbb C^d\otimes \mathbb C^d)$, to approximate the support function of the separable states $h_{\Sep}(M)$ -- see \Cref{eq:hSep-intro}. Our algorithm allows us to approximate $h_{\Sep}(M)$ with additive accuracy, for all $M$ with $\norm{M}_{\Frob}\le 1$, thus immediately implying a polynomial-time algorithm for BSS for this special case. The general case $0 \leq M \leq \Id $ is unlikely to admit a polynomial-time algorithm, as shown in \cite{Harrow2013TestingProductStates}.

Similarly, a \emph{mean-field Hamiltonian} with product-pair interaction consists of a Hermitian operator acting on
n sites (each formed by a $d$-dimensional quantum system)
defined as 
\[H_n :=
\frac 2n \sum_{i<j} K_{i,j} 
\]
with $K_{i,j}$ a Hermitian matrix which acts as $K$ on sites $i$ and $j$, acting as the identity on other locations. It is well-known, e.g., \cite{fannes2006finite}, that the computation of the ground-state energy of a mean-field Hamiltonian $H_n$, in the limit $n\to\infty$, can be reduced to computing $-h_{\Sep}(-K)$. Our algorithm can be used to approximate it, up to error $\eta\norm{K}_{\Frob}$. This improves the approach of \cite{Brandao2011Quasipolynomial} from quasipolynomial to polynomial time.

\subsection{Proof Overview}

The main technical step is to design a randomized polynomial-time approximation algorithm for optimizing the support function \begin{equation}\label{eq:hSep-intro}
     h_{\Sep}(M)= \max_{\norm{x}_2=\norm{y}_2=1}(x\otimes y)^*M(x\otimes y)
     \end{equation}
for any Hermitian $M \in L(\C^d\otimes \C^d)$ with $\norm{M}_{\Frob} \leq 1$.
The weak-membership algorithm is then obtained by the standard separation-to-membership reduction, implemented via gradient descent (in \Cref{sec:opt-to-membership}). Our first step is to simplify the problem by decomposing the matrix $M$ into \[ M=\alpha \Id\otimes \Id+A\otimes \Id+\Id\otimes B+C,\] where \(A,B\) are traceless and \(C\) has vanishing partial traces. We can then rewrite
\begin{align*}    
h_{\Sep}(M)&= \max_{\norm{x}_2=\norm{y}_2=1}(x\otimes y)^*M(x\otimes y) = \alpha + \max_{\norm{x}_2=\norm{y}_2=1}x^*Ax + y^*By + (x\otimes y)^*C(x\otimes y)\\
&\leq \alpha + \norm{A}_{\Frob} + \norm{B}_{\Frob} + h_{\Sep}(C) \leq \alpha + \frac{2}{\sqrt{d}} + h_{\Sep}(C)
\end{align*}
where the last inequality holds because $\norm{M}_{\Frob} \leq 1$ and the four summands are Hilbert-Schmidt orthogonal. For a growing $d$, the term $\frac{2}{\sqrt{d}}$ becomes irrelevant, so it suffices to have an algorithm for estimating $h_{\Sep}(C)$, where $C$ satisfies $\Tr_0 C = \Tr_1 C =0$.

The goal will be to express this optimization problem as a constraint satisfaction (CSP) problem. The challenge is to formulate the problem in such a way that the CSP is \emph{dense} and therefore efficiently solvable via known polynomial-time approximation schemes (PTAS) \cite{Arora1995DensePTAS}. For this to be possible, we require the maximizers to be \emph{flat}, i.e., to have entrywise small coefficients.

The key observation is that, after a random change of local basis, Hermitian matrices with vanishing partial traces are entrywise flat in expectation. Namely, sample independent Haar-random unitaries \(U,V \sim \mathrm{Haar}(\mathcal{U}(d))\) and replace \(C\) by \[ \hat C=(U\otimes V)^*C(U\otimes V).\] 
Since local unitaries preserve product states, \(h_{\Sep}(C)=h_{\Sep}(\hat C)\). Then,  in \Cref{sec:haar-bound}, we show that the entries of $\hat C$ are indeed spread nearly uniformly: We prove that, with probability at least $1-\delta^2$, we can decompose 
\[\hat C = (U\otimes V)^*C(U\otimes V)=C_{\mathrm{flat}}+C_{\mathrm{tail}},\] where \(\norm{C_{\mathrm{tail}}}_{\Frob} =O(\delta)\), while \[\norm{C_{\mathrm{flat}}}_{\rm max} \lesssim \frac{\operatorname{polylog}(1/\delta)}{d^2}.\]
The small tail perturbs \(h_{\Sep}\) by only \(O(\delta)\), so it suffices to optimize the flat instance $C_{\rm flat}$.

Next, in \Cref{sec:disc-main} we prove that flat matrices have near-flat optimizers. Then, to make the problem combinatorial, we prove a dimension-free discretization theorem: For a flat matrix, the continuous product-state optimum is, up to error \(\eta\), equal to the maximum of a suitably defined objective function
over \(x,y\in \mathcal E^d\), where \(\mathcal E\subset \mathbb C\) is an \(\varepsilon\)-net of a disk of constant radius. Crucially, \(|\mathcal E|=\operatorname{poly}(1/\eta,\log(1/\delta))\), independent of the Hilbert-space dimension \(d\). The proof first shows that flat instances have bounded-amplitude near optimizers, and then uses Lipschitz continuity to round each coordinate independently.

It remains to solve the resulting finite-alphabet problem. In \Cref{sec:opt}, the discretized objective is encoded as a dense Max-\(4\)-CSP with variables \[z_i=(x_i,y_i)\in \Omega:=\mathcal E\times \mathcal E,\qquad i\in[d]\]
and the objective function is chosen so that the optimal value coincides with \(h_{\Sep}\).
Since \(|\Omega|\) is independent of \(d\), the dense-CSP PTAS gives, in polynomial time for every fixed accuracy, an assignment whose value is within the required additive error. Mapping this assignment back through the Haar rotation, yields unit vectors \(\hat x,\hat y\) satisfying \[(\hat x\otimes \hat y)^*M(\hat x\otimes \hat y)\approx h_{\Sep}(M).\]

\section{Preliminaries}\label{sec:prelim}

All Hilbert spaces $\mathcal{H}$ are finite-dimensional and we denote by $L(\mathcal H)$ the set of all linear operators acting on $\mathcal{H}$.  We use the Hilbert--Schmidt/Frobenius inner product
\[
 \ip{A}{B}:=\Tr(A^*B),
\]
and write $\norm{A}_{\Frob}^2:=\Tr(A^*A)$ for the Frobenius norm of $A$. For a random $\mathcal{H}$-valued variable $X$, we write 
$\norm{X}_{L_q(\mathcal{H})}:=(\mathbb E\|X\|_{\mathcal{H}}^q)^{1/q}$. When clear from the context, we omit the dependence on $\mathcal{H}$. 

For a matrix $A$, we denote the size of its largest entry by $\norm{A}_{\rm max} = \max_{a,b} \abs{A_{a,b}}$.

\subsection{Hermitian Decomposition}\label{sec:hermite}

For a matrix $M\in L(\C^d\ot\C^d)$, we write its entries as
\[
 M_{ac,be}:= (u_a \otimes u_c)^* M (u_b\ot u_e)
\]
where $a,b,c,e\in[d]$ and $u_j$ denotes the $j$-th standard basis vector of $\C^d$. We write
\(\Tr_0\) and \(\Tr_1\) for the partial traces over
the first and second tensor factors, i.e.,
\[
(\Tr_0 M)_{c,e}=\sum_a M_{ac,ae},
\quad\text{and}\quad
(\Tr_1 M)_{a,b}=\sum_c M_{ac,bc}.
\]
Every Hermitian $M\in L(\C^d\ot\C^d)$ decomposes uniquely as
\begin{equation}\label{eq:local-decomp}
 M=\alpha \Id\ot\Id+A\ot\Id+\Id\ot B+C,
\end{equation}
where $\Tr A=\Tr B=0$ and $\Tr_0 C=\Tr_1 C=0$.
Explicitly,
\[
 \alpha=\frac{\Tr M}{d^2}\quad\text{and}\quad
 A=\frac1d\Tr_1 M-\alpha \Id
 \quad\text{and}\quad
 B=\frac1d\Tr_0 M-\alpha \Id.
\]  
The four summands in \Cref{eq:local-decomp} are Hilbert--Schmidt orthogonal, and hence
\begin{equation}\label{eq:orthogonal-decomp-norm}
 \norm{M}_{\Frob}^2=|\alpha|^2d^2+d\norm{A}_{\Frob}^2+d\norm{B}_{\Frob}^2+\norm{C}_{\Frob}^2.
\end{equation}
In particular, if $\norm{M}_{\Frob}\le1$, then $\norm{C}_{\Frob}\le1$ and $\norm{A}_{\Frob},\norm{B}_{\Frob}\le d^{-1/2}$.

\subsection{Complex Gaussians}\label{sec:gauss}

A random vector \(X\in\C^d\) is a standard complex Gaussian, denoted
\(X\sim\mathcal N_{\C}(0,\Id_d)\), if
\[
 X=\frac{G+iH}{\sqrt2},
\]
where \(G,H\in\R^d\) are independent real Gaussian vectors with $ G,H\sim\mathcal N(0,\Id_d)$.
We recall the following standard inverse moment bound for the norm of a complex Gaussian vector.
\begin{lemma}\label{lem:inverse-gaussian-norm}
Let \(X\sim\mathcal N_{\C}(0,\Id_d)\) and let $Y:= \norm{X}_2^2$. Then for every integer
\(1\le q\le d/4\),
\[
 \norm{Y^{-1}}_{L_{2q}}
 \le \frac{2}{d}.
\]
\end{lemma}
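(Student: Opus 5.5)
We compute the inverse moment exactly and then bound it. Since $X=(G+iH)/\sqrt2$ with $G,H$ independent standard real Gaussians, we have $Y=\norm{X}_2^2=\frac12\sum_{j=1}^d (G_j^2+H_j^2)$. Thus $2Y\sim\chi^2_{2d}$, or equivalently $Y\sim\mathrm{Gamma}(d,1)$ with density $y^{d-1}e^{-y}/\Gamma(d)$ on $(0,\infty)$. The quantity to bound is $\norm{Y^{-1}}_{L_{2q}}=\left(\mathbb E\,Y^{-2q}\right)^{1/(2q)}$.

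First, compute the moment. For $2q<d$ the Gamma integral gives
\[
\mathbb E\,Y^{-2q}=\frac{\Gamma(d-2q)}{\Gamma(d)}=\prod_{k=1}^{2q}\frac{1}{d-k}.
\]
The hypothesis $q\le d/4$ gives $2q\le d/2<d$, so the integral converges.

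Second, bound each factor. For $1\le k\le 2q\le d/2$ we have $d-k\ge d/2$, so each factor is at most $2/d$. Hence $\mathbb E\,Y^{-2q}\le (2/d)^{2q}$, and taking the $(2q)$-th root gives $\norm{Y^{-1}}_{L_{2q}}\le 2/d$.

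There is no real obstacle here. The only care needed is to identify the normalization correctly: the factor $1/\sqrt2$ makes $Y$ a $\mathrm{Gamma}(d,1)$ variable rather than a $\chi^2_{2d}$ variable, and with the wrong scaling the constant would be off by a factor of $2$. The hypothesis $q\le d/4$ is used only to guarantee $d-k\ge d/2$ for every factor; the weaker condition $q<d/2$ would already give finiteness, but with a worse constant.
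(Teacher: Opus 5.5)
Your proof is correct and matches the paper's: both identify $Y\sim\Gamma(d,1)$, compute $\mathbb{E}\,Y^{-2q}=\Gamma(d-2q)/\Gamma(d)$, and use $q\le d/4$ to get $d-2q\ge d/2$. The only cosmetic difference is that you bound each of the $2q$ factors $1/(d-k)$ directly by $2/d$, whereas the paper first bounds the whole product by $(d-2q)^{-2q}$ and applies $d-2q\ge d/2$ after taking the $(2q)$-th root.
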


\begin{proof}
Since \(X\sim\mathcal N_{\C}(0,\Id_d)\), we have that $\|X\|_2^2$ is distributed according to 
\(\Gamma(d,1)\), with density
\[
 f(t)=\frac{1}{(d-1)!}t^{d-1}e^{-t},\qquad t>0.
\]
Thus, for \(2q<d\),
\begin{align*}  
\norm{Y^{-1}}_{L_{2q}}^{2q}=
 \E\|X\|_2^{-4q}
 = \frac{1}{(d-1)!}\int_0^\infty t^{d-1-2q}e^{-t}\,dt
 = \frac{(d-1-2q)!}{(d-1)!}
 \le (d-2q)^{-2q}.
\end{align*}
Taking the \(1/(2q)\)-th power gives
\[
 \norm*{Y^{-1}}_{L_{2q}}
 \le \frac{1}{d-2q}.
\]
If \(q\le d/4\), then \(d-2q\ge d/2\), hence the bound follows.
\end{proof}
We also recall the following consequence of Gaussian hypercontractivity, proven for instance in \cite[Theorem A.1]{adamczak2021moments}.

\begin{lemma}\label{lem:gaussian-hypercontractivity}
Let \(\mathcal H\) be a finite-dimensional Hilbert space, and let
\(X\sim\mathcal N_{\C}(0,\Id_d)\).  Let
\(f:\C^d\to\mathcal H\) be an \(\mathcal H\)-valued polynomial of degree at
most \(m\) in the real and imaginary parts of the coordinates of \(X\).
Then, for every \(r\ge2\),
\[
 \norm{f(X)}_{L_r(\mathcal H)}
 \le
 c_m r^{m/2}\norm{f(X)}_{L_2(\mathcal H)}
\]
for a constant $c_m$ depending only on $m$.
\end{lemma}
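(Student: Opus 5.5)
This is a cited result (Theorem A.1 of adamczak2021moments), so the plan is to recall the standard argument that derives it from Gaussian hypercontractivity.

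\textbf{Step 1: reduce to real Gaussians.} Write $X=(G+iH)/\sqrt2$. Then $f(X)=F(W)$, where $W=(G,H)\sim\mathcal N(0,\Id_{2d})$ is a real standard Gaussian vector in $\R^{2d}$. The map $F$ is an $\mathcal H$-valued polynomial of degree at most $m$ in $W$.

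\textbf{Step 2: Wiener chaos decomposition.} Fix an orthonormal basis of $\mathcal H$ (real if $\mathcal H$ is viewed as a real space). Expand each coordinate of $F$ in multivariate Hermite polynomials of degree $\le m$. This gives
\[
 F=\sum_{k=0}^m F_k ,
\]
where $F_k$ lies in the $k$-th Wiener chaos tensored with $\mathcal H$. These components are orthogonal in $L_2(\mathcal H)$.

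\textbf{Step 3: apply the Ornstein--Uhlenbeck semigroup.} Let $P_t$ be the Ornstein--Uhlenbeck semigroup, which acts on the $k$-th chaos by $P_tF_k=e^{-kt}F_k$. Nelson's hypercontractivity holds for Hilbert-valued functions with the same constant, since it passes to $\ell_2$-valued functions. It gives $\norm{P_tF}_{L_r}\le\norm{F}_{L_2}$ when $e^{2t}=r-1$. Inverting $P_t$ on chaos of degree $\le m$ costs a factor $e^{mt}=(r-1)^{m/2}$. Concretely, set $\tilde F=\sum_k e^{kt}F_k$, so that $F=P_t\tilde F$. Then
\[
 \norm{F}_{L_r}\le\norm{\tilde F}_{L_2}\le (r-1)^{m/2}\norm{F}_{L_2},
\]
where the last inequality uses the orthogonality of the chaos components.

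\textbf{Step 4: conclude.} This gives the claim with $c_m=1$, independent of $d$ and of $\mathcal H$.

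\textbf{Main obstacle.} The only delicate point is the vector-valued hypercontractivity in Step 3. I would justify it by noting that the Ornstein--Uhlenbeck semigroup acts coordinatewise, and that Nelson's inequality extends to Hilbert-valued functions. This extension follows from the tensorization and Gross log-Sobolev proof, or from the Mehler formula plus Minkowski's inequality.
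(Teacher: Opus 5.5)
Your argument is correct, and it takes a more self-contained route than the paper. The paper gives no proof of this lemma: it invokes \cite[Theorem A.1]{adamczak2021moments}, a moment comparison for vector-valued Gaussian polynomials in a general normed-space setting, which is presumably why the lemma carries an unspecified constant $c_m$.

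Your proof uses the Hilbert structure in one place, the Pythagorean identity $\|\sum_k e^{kt}F_k\|_{L_2(\mathcal H)}^2=\sum_k e^{2kt}\|F_k\|_{L_2(\mathcal H)}^2$. This lets you undo $P_t$ at cost $e^{mt}=(r-1)^{m/2}$, so you get the sharp bound $\|F\|_{L_r}\le (r-1)^{m/2}\|F\|_{L_2}$, i.e.\ $c_m=1$. In a general Banach space this is the step that fails: the chaos components are no longer orthogonal, and bounding each homogeneous part by the whole polynomial costs an $m$-dependent factor. The cited result buys generality the paper does not need, since it only applies the lemma with $\mathcal H$ a Hilbert space and $m=2$. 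For that use, your argument suffices.

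For the vector-valued Nelson step, the cleanest justification is the second one you list. Mehler's formula $P_tg(x)=\mathbb E\, g(e^{-t}x+\sqrt{1-e^{-2t}}\,Y)$ and the triangle inequality give $\|P_tg(x)\|_{\mathcal H}\le (P_t\|g\|_{\mathcal H})(x)$ pointwise. Scalar hypercontractivity applied to the nonnegative function $\|g\|_{\mathcal H}$ then finishes the step, and this works in any Banach space. The log-Sobolev/tensorization remark is unnecessary.
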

Finally, we state the following useful consequence of Wick's formula for complex Gaussians.
\begin{lemma}
\label{lem:complex-gaussian-quadratic-form}
Let \(X\sim\mathcal N_{\C}(0,\Id_d)\) be a standard complex Gaussian vector.
Then, for every \(T\in L(\C^d)\),
\[
 \E \left|
 X^*TX-\Tr T
 \right|^2
 =
 \norm{T}_{\Frob}^2 .
\]
\end{lemma}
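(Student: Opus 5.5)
We expand both sides in coordinates and apply Wick's formula for complex Gaussians. Since $X^*TX-\Tr T$ is generally complex, we must compute $\E|Z|^2=\E Z\overline Z$ with $Z:=X^*TX-\Tr T$. The only moments needed are
\[
\E[X_i\overline{X_j}]=\delta_{ij},\qquad \E[X_iX_j]=0,
\]
which follow from $X=(G+iH)/\sqrt2$ with $G,H$ independent standard real Gaussians. Together with the complex Wick (Isserlis) rule, these determine every fourth moment. Concretely,
\[
\E[\overline{X_a}X_b X_c\overline{X_e}]=\delta_{ab}\delta_{ce}+\delta_{ac}\delta_{be},
\]
since only pairings of a conjugated coordinate with an unconjugated one contribute.

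The first step is to compute the mean. Writing $X^*TX=\sum_{a,b}\overline{X_a}T_{ab}X_b$ gives $\E X^*TX=\sum_a T_{aa}=\Tr T$, so $\E Z=0$. It follows that
\[
\E|Z|^2=\E|X^*TX|^2-|\Tr T|^2.
\]

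The second step is the fourth moment. We have $\overline{X^*TX}=X^*T^*X=\sum_{c,e}X_c\overline{T_{ec}}\,\overline{X_e}$, so
\[
\E|X^*TX|^2=\sum_{a,b,c,e}T_{ab}\overline{T_{ec}}\,\E[\overline{X_a}X_b X_c\overline{X_e}].
\]
Substituting the Wick formula splits this into two terms:
\begin{itemize}
\item the pairing $\delta_{ab}\delta_{ce}$ gives $\sum_{a,c}T_{aa}\overline{T_{cc}}=|\Tr T|^2$;
\item the pairing $\delta_{ae}\delta_{bc}$ gives $\sum_{a,b}T_{ab}\overline{T_{ab}}=\norm{T}_{\Frob}^2$.
\end{itemize}
Subtracting $|\Tr T|^2$ then yields $\E|Z|^2=\norm{T}_{\Frob}^2$, as claimed.

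The one step requiring care is verifying the complex fourth-moment formula, and in particular that $\E[X_iX_j]=0$ for all $i,j$, including $i=j$. This holds because $\E X_i^2=\tfrac12(\E G_i^2-\E H_i^2)=0$.

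One can also check the diagonal case directly: $\E|X_i|^4=2$, obtained from $|X_i|^2\sim\mathrm{Exp}(1)$, which agrees with the formula at $a=b=c=e$. With this, the computation above is complete.
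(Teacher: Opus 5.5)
Your argument is essentially the paper's: you expand in coordinates and apply the complex Wick identity $\E[\overline{X_a}X_bX_c\overline{X_e}]=\delta_{ab}\delta_{ce}+\delta_{ac}\delta_{be}$, differing only cosmetically in computing $\E|X^*TX|^2-|\Tr T|^2$ rather than expanding the centered products $(X_i^*X_j-\delta_{ij})$ directly. Do fix two index slips that happen to cancel each other. The conjugate is $\overline{X^*TX}=\sum_{c,e}X_c\overline{T_{ce}}\,\overline{X_e}$, not $\overline{T_{ec}}$. The second surviving pairing is $\delta_{ac}\delta_{be}$, as in your own formula, not $\delta_{ae}\delta_{bc}$, which would pair two conjugated coordinates and vanish. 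Your written expansion combined with the correct pairing would give $\sum_{a,b}T_{ab}\overline{T_{ba}}$, which is wrong for non-symmetric $T$, whereas the corrected expansion gives $\sum_{a,b}|T_{ab}|^2=\norm{T}_{\Frob}^2$ as claimed.
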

\begin{proof}
Write \(X=(X_1,\ldots,X_d)\). Expanding the internal sum,
\[
X^*TX-\Tr T
=
\sum_{i,j=1}^d T_{ij}\bigl({X}^*_iX_j-\delta_{ij}\bigr).
\]
Therefore 
\begin{align*}
\E\left|X^*TX-\Tr T\right|^2
&=
\sum_{i,j,k,\ell=1}^d
T_{ij}T_{k\ell}^*\,
\E
(X^*_iX_j-\delta_{ij})
(X_kX^*_\ell-\delta_{k\ell})\\
&=
\sum_{i,j,k,\ell=1}^d
T_{ij}T_{k\ell}^*\delta_{ik}\delta_{j\ell}
=
\sum_{i,j=1}^d |T_{ij}|^2
=
\|T\|_{\Frob}^2
\end{align*}
where the second line uses Wick's complex Gaussian identity $\E X^*_iX_jX_kX^*_\ell
=
\delta_{ij}\delta_{k\ell}
+
\delta_{ik}\delta_{j\ell}
$ together with $\E X_i^*X_j = \E X_iX_j^*=\delta_{ij}$.
\end{proof}

\subsection{The Haar Measure on the Unitary Group}\label{sec:haar}

Let $\mathcal{U}(d)$ be the unitary group in dimension $d$.
The Haar measure on $\mathcal{U}(d)$ is the unique Borel probability measure $\mu$ such that, for every Borel set $A \subseteq \mathcal{U}(d)$ and every $V \in \mathcal{U}(d)$,
\[
\mu(VA)=\mu(AV)=\mu(A).
\]
Equivalently, the Haar measure is the uniform probability measure on $\mathcal{U}(d)$.

It is well known that one can sample from the Haar measure on $\mathcal{U}(d)$ in time polynomial in $d$. For example, one may sample a matrix with independent complex Gaussian entries, compute its QR decomposition, and then normalize the phases of the diagonal entries of the triangular factor. This construction is described, for instance, in \cite{mezzadrirandom}.

To keep the exposition clean, we ignore issues of numerical precision and assume that Haar-random unitaries are sampled with infinite-precision real arithmetic. I.e., the complexity is measured in the number of real-arithmetic operations done by the algorithm, as in prior work \cite{Doherty2004CompleteFamily,Brandao2011Quasipolynomial}. In \Cref{sec:bit} we outline the modifications needed to upgrade the algorithm and the analysis to the finite-precision arithmetic model.

\subsection{Quantum Separability}\label{sec:sep}

Let $\Sep(d,d):=\operatorname{conv}
\left\{
(xx^*)\otimes(yy^*) :
\|x\|_2=\|y\|_2=1
\right\}$ denote the convex hull of bipartite product states on $\C^d\ot\C^d$.  For a Hermitian matrix $M\in L(\C^d\ot\C^d)$, define the separable support function
\[
 h_{\Sep}(M):=\max_{\rho\in\Sep(d,d)}\Tr(M\rho) = \max_{\norm{ x}_2=\norm{ y}_2=1}(x \ot y)^* M(x \ot y).
\]
The \emph{weak membership problem} \cite{Gurvits2003Edmonds,Gharibian2010StrongNPHardness} for $\Sep(d,d)$ with gap $\eta>0$ is a promise problem, whose instances are density matrices $\rho$ in $\C^d\ot\C^d$.
An algorithm for the weak membership problem $\Sep(d,d)$ must satisfy the following conditions:
\begin{itemize}
    \item (Completeness) For all $\rho \in \Sep(d,d)$, the algorithm accepts $\rho$ with probability at least $\frac23$.
    \item (Soundness) For all $\rho$ such that $\min_{\psi\in\Sep(d,d)} \norm{\rho-\psi}_{\Frob} > \eta$, the algorithm accepts $\rho$ with probability at most $\frac13$.
\end{itemize}
One can also consider the asymmetric variant of the problem $\Sep(d_0,d_1)$, where the two Hilbert spaces have different dimensions, but the version stated above with $d_0 = d_1 = d$ is without loss of generality: Let $d=\max\{d_0,d_1\}$, and let $V_0:\C^{d_0}\mapsto\C^d$ and $V_1:\C^{d_1}\mapsto\C^d$ be isometric embeddings. On input $\rho \in \C^{d_0} \otimes \C^{d_1}$, we can run the weak-membership algorithm for $\Sep(d,d)$ on 
\[
(V_0\otimes V_1)\rho(V_0\otimes V_1)^*.
\]
Completeness is immediate, since local isometries map separable states to separable states. For soundness, it suffices to prove
\[
\min_{\psi\in\Sep(d,d)} \norm{(V_0\otimes V_1)\rho(V_0\otimes V_1)^*-\psi}_{\Frob} \geq 
\min_{\psi\in\Sep(d_0,d_1)} \norm{\rho-\psi}_{\Frob}.
\]
Indeed, fix any $\sigma\in\Sep(d,d)$. Its compression $(V_0\otimes V_1)^*\sigma(V_0\otimes V_1) \in \C^{d_0}\otimes\C^{d_1}$ is a subnormalized separable positive operator. Therefore,
\[(V_0\otimes V_1)^*\sigma(V_0\otimes V_1)
+
\left(
1-\Tr\left((V_0\otimes V_1)^*\sigma(V_0\otimes V_1)\right)
\right)
\frac{\Id_{d_0d_1}}{d_0d_1}
\]
is a state in $\Sep(d_0,d_1)$. Consequently,
\begin{align*}
\min_{\psi\in\Sep(d_0,d_1)} \norm{\rho-\psi}_{\Frob}^2
&\leq
\left\|
\rho-(V_0\otimes V_1)^*\sigma(V_0\otimes V_1)-
\left(
1-\Tr\left((V_0\otimes V_1)^*\sigma(V_0\otimes V_1)\right)
\right)
\frac{\Id_{d_0d_1}}{d_0d_1}
\right\|_{\Frob}^2\\
&=
\left\|
\rho-(V_0\otimes V_1)^*\sigma(V_0\otimes V_1)
\right\|_{\Frob}^2
-
\frac{
\left(
1-\Tr\left((V_0\otimes V_1)^*\sigma(V_0\otimes V_1)\right)
\right)^2
}{d_0d_1}\\
&\leq
\left\|
\rho-(V_0\otimes V_1)^*\sigma(V_0\otimes V_1)
\right\|_{\Frob}^2\\
&=
\left\|
(V_0\otimes V_1)^*
\left(
(V_0\otimes V_1)\rho(V_0\otimes V_1)^*-\sigma
\right)
(V_0\otimes V_1)
\right\|_{\Frob}^2\\
&\leq
\left\|
(V_0\otimes V_1)\rho(V_0\otimes V_1)^*-\sigma
\right\|_{\Frob}^2.
\end{align*}
The last inequality uses that compression by isometries is contractive in Frobenius norm. Taking the infimum over $\sigma\in\Sep(d,d)$ proves the desired inequality.
\section{Haar Moment Bounds}\label{sec:haar-bound}

For $a,b\in[d]$, define the matrix
\[
 H_{ab}:=u_bu_a^*-\delta_{ab}\frac{\Id}{d}.
\]
We prove our main analytic estimate in the following.
\begin{lemma}\label{lem:one-party}
Let \(\mathcal H\) be a finite-dimensional Hilbert space, let $\mathcal A:L(\C^d)\to\mathcal H$ be linear, where \(L(\C^d)\) is equipped with the Frobenius inner product and let $U \sim{\rm Haar}(\mathcal U(d))$. There exist absolute constants $c_0,c_1>0$ such that for all $a,b\in[d]$ and all $2\le q\le c_0d$
\[
 \left\|
 \mathcal A(UH_{ab}U^*)
 \right\|_{L_q(\mathcal H)}
 \le
 c_1\frac qd\norm{\mathcal A}_{\Frob}.
\]
\end{lemma}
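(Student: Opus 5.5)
The plan is to reduce to the diagonal case $a=b$, and then handle that case by writing a Haar column as a normalized Gaussian vector, combining hypercontractivity (\Cref{lem:gaussian-hypercontractivity}) with the inverse-norm bound (\Cref{lem:inverse-gaussian-norm}).

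\emph{Reduction to $a=b$.} For $a\neq b$, set $w_\pm=(u_a\pm u_b)/\sqrt2$ and $z_\pm=(u_a\pm i u_b)/\sqrt2$. Then
\[
w_+w_+^*-w_-w_-^*=u_au_b^*+u_bu_a^*,
\qquad
z_+z_+^*-z_-z_-^*=i(u_bu_a^*-u_au_b^*).
\]
Hence $u_bu_a^*$ is a combination, with coefficients of modulus $\tfrac12$, of the four traceless matrices $ww^*-\Id/d$ for $w\in\{w_\pm,z_\pm\}$; the $\Id/d$ terms cancel in each difference. For each such unit vector $w$, pick a unitary $W$ with $Wu_1=w$. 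By right invariance of Haar measure, $UW$ is again Haar, so
\[
U(ww^*-\Id/d)U^*=(UW)H_{11}(UW)^*\overset{d}{=}UH_{11}U^*.
\]
Minkowski's inequality in $L_q(\mathcal H)$ then gives
\[
\|\mathcal A(UH_{ab}U^*)\|_{L_q}\le 2\|\mathcal A(UH_{11}U^*)\|_{L_q},
\]
so it suffices to treat $a=b$.

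\emph{Diagonal case.} The column $v=Uu_a$ is uniform on the sphere, so it has the law of $X/\|X\|_2$ with $X\sim\mathcal N_{\C}(0,\Id_d)$. Writing $Y=\|X\|_2^2$,
\[
vv^*-\frac{\Id}{d}=\frac{f(X)}{Y},\qquad f(X):=XX^*-\frac{\|X\|_2^2}{d}\Id .
\]
Here $f$ is a matrix-valued polynomial of degree $2$ in the real and imaginary coordinates. By Cauchy--Schwarz in $L_q$,
\[
\|\mathcal A(UH_{aa}U^*)\|_{L_q}\le \|\mathcal A(f(X))\|_{L_{2q}}\,\|Y^{-1}\|_{L_{2q}}.
\]
\Cref{lem:inverse-gaussian-norm} bounds the second factor by $2/d$ whenever $q\le d/4$, which fixes $c_0=1/4$. \Cref{lem:gaussian-hypercontractivity} with $m=2$ and $r=2q$ bounds the first factor by $2c_2q\,\|\mathcal A(f(X))\|_{L_2}$. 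It therefore remains to show $\|\mathcal A(f(X))\|_{L_2}\le\|\mathcal A\|_{\Frob}$, which yields the claim with $c_1=4c_2$.

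\emph{Main step: the $L_2$ bound.} Write $f(X)=P(XX^*-\Id)$, where $P(Z)=Z-\frac{\Tr Z}{d}\Id$ is the orthogonal projection onto traceless matrices. Choose an orthonormal basis $e_k$ of $\mathcal H$ and put $T_k:=(\mathcal A P)^*e_k$, so that
\[
\langle e_k,\mathcal A f(X)\rangle=\langle T_k, XX^*-\Id\rangle = X^*T_k^*X-\Tr T_k^* .
\]
(The Frobenius pairing $\langle T,Z\rangle=\Tr(T^*Z)$ is what produces $T_k^*$; its Frobenius norm equals that of $T_k$.) By \Cref{lem:complex-gaussian-quadratic-form}, $\E|\langle e_k,\mathcal A f(X)\rangle|^2=\|T_k\|_{\Frob}^2$. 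Summing over $k$,
\[
\E\|\mathcal A f(X)\|^2=\|\mathcal A P\|_{\Frob}^2\le\|\mathcal A\|_{\Frob}^2 ,
\]
since $P$ is a contraction. I expect this computation to be the main point needing care, specifically applying the scalar Wick identity coordinatewise and checking that the projection only decreases the norm. The reduction and the Hölder step are routine.
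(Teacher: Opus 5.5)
Your proof is correct and follows essentially the same route as the paper. Both arguments use the Gaussian representation of a Haar column and split off $\|G\|_2^{-2}$ by Hölder, then apply the inverse-moment bound together with degree-$2$ hypercontractivity, the Wick/$L_2$ computation with the traceless projection as a contraction, and the same four-term polarization identity for $a\neq b$. Your routing through right-invariance to $H_{11}$ and placing the projection on the $\mathcal A$ side rather than the adjoint side are cosmetic differences. One bookkeeping slip: the diagonal case gives constant $4c_2$, so after the factor-$2$ reduction the overall constant is $c_1=8c_2$, not $4c_2$, which is immaterial.
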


\begin{proof}
First consider the rank-one traceless case $ H:=ww^*-\Id / d$,
where \(w\in\C^d\) is a fixed unit vector.  Then \(Uw\) is a
Haar-uniform unit vector, which we can alternatively sample as
\[
 Uw\stackrel{d}{=}  \frac{G}{\norm{G}_2},
 \quad\text{with}\quad
 G\sim\mathcal N_{\C}(0,\Id_d).
\]
Then
\begin{align*}
\norm{\mathcal A(UHU^*)}_{L_q(\mathcal H)} &= \norm{\norm{G}_2^{-2}\mathcal A\left(GG^*-\frac{\norm{G}_2^2}{d}\Id\right)}_{L_q(\mathcal H)} \\&\le
 \norm{\norm{G}_2^{-2}}_{L_{2q}}\cdot 
 \norm{\mathcal A \left(GG^*-\frac{\norm{G}_2^2}{d}\Id\right)}_{L_{2q}(\mathcal H)}
\end{align*}
by Hölder inequality. By \Cref{lem:inverse-gaussian-norm} (Gaussian inverse moment bound), for \(q\le d/4\),
\[
 \norm{\norm{G}_2^{-2}}_{L_{2q}}
 \le
 \frac2d.
\]
On the other hand, the factor on the RHS is the norm of a degree-2 polynomial map in $G$, thus by \Cref{lem:gaussian-hypercontractivity} (Gaussian hypercontractivity), we can bound it by
\[
 \norm{\mathcal A \left(GG^*-\frac{\norm{G}_2^2}{d}\Id\right)}_{L_{2q}(\mathcal H)}
 \le
 c_2 q\norm{\mathcal A \left(GG^*-\frac{\norm{G}_2^2}{d}\Id\right)}_{L_2(\mathcal H)}
\]
for some absolute constant $c_2$.
It remains to bound the \(L_2\)-norm.  Let \((h_\ell)_\ell\) be an orthonormal
basis of \(\mathcal H\) and let
\(\mathcal A^*:\mathcal H\to L(\C^d)\) denote the adjoint of \(\mathcal A\),
where \(L(\C^d)\) is equipped with the Frobenius inner product.  Then
\[
\begin{aligned}
\norm{\mathcal A \left(GG^*-\frac{\norm{G}_2^2}{d}\Id\right)}_{L_2(\mathcal H)}^2
&=
\sum_\ell
\E\left|
\Tr\left(
(\mathcal A^*h_\ell)^*
\left(
GG^*-\frac{\norm{G}_2^2}{d}\Id
\right)
\right)
\right|^2  \\
&=
\sum_\ell
\E\left|
G^*
\left(
(\mathcal A^*h_\ell)^*
-\frac{\Tr((\mathcal A^*h_\ell)^*)}{d}\Id
\right)
G
\right|^2  \\
&=
\sum_\ell
\left\|
(\mathcal A^*h_\ell)^*
-\frac{\Tr((\mathcal A^*h_\ell)^*)}{d}\Id
\right\|_{\Frob}^2 \\
&\le
\sum_\ell
\|\mathcal A^*h_\ell\|_{\Frob}^2
=
\|\mathcal A\|_{\Frob}^2 .
\end{aligned}
\]
The first equality uses Parseval’s identity in $\mathcal H$ and the definition
of the adjoint $\mathcal A^*$. The second equality uses
$\Tr(TGG^*)=G^*TG$ and $\|G\|_2^2=G^*G$. The third equality uses
\Cref{lem:complex-gaussian-quadratic-form} (Wick's formula), applied to the traceless operator
inside the quadratic form. The inequality uses that subtracting
\(\frac{\Tr T}{d}\Id\) is the Hilbert–Schmidt orthogonal projection onto the
traceless subspace, and hence cannot increase the Frobenius norm.
Combining these estimates proves the claim when $a=b$.

Now suppose \(a\ne b\), for $\omega\in\{1,-1,i,-i\}$, let
\[
 \psi_\omega:=\frac{u_a+\omega \cdot u_b}{\sqrt2}.
\]
Then we have the standard identity
\[
 u_bu_a^*
 =
 \frac12\left(\psi_1\psi_1^* 
 -\psi_{-1}\psi_{-1}^*\right)
 +
 \frac{1}{2i}\left(\psi_i\psi_i^*
 -\psi_{-i}\psi_{-i}^*\right).
\]
Since the coefficients sum to zero, we may subtract \(\Id/d\),
\[
 u_bu_a^*
 =
 \frac12\left(\psi_1\psi_1^*-\frac{\Id}{d}\right)
 -\frac12\left(\psi_{-1}\psi_{-1}^*-\frac{\Id}{d}\right)
 +\frac{1}{2i}\left(\psi_i\psi_i^*-\frac{\Id}{d}\right)
 -\frac{1}{2i}\left(\psi_{-i}\psi_{-i}^*-\frac{\Id}{d}\right).
\]
After conjugating by \(U\), each term is of the form discussed above. The same bound (up to a factor $2$) follows by the triangle inequality.
\end{proof}
Next, we apply the above estimate to obtain a bound on the norm of the entries of a matrix with vanishing partial traces.

\begin{proposition}\label{prop:haar-moment}
Let \(M\in L(\C^d\otimes\C^d)\) satisfy $\Tr_0 M=\Tr_1 M=0$ and let \(U,V\sim{\rm Haar}(\mathcal U(d))\) independently.  There are absolute
constants \(c_0,c_1>0\) such that, for every \(a,b,c,e\in[d]\) and every integer
\(p\) with \(1\le p\le c_0d\),
\[
\norm{(u_a\otimes u_c)^*(U\otimes V)^*M(U\otimes V)(u_b\otimes u_e)}_{L_{2p}}
 \le
 c_1\cdot \frac{p^2}{d^2}\norm{M}_{\Frob}.
\]
\end{proposition}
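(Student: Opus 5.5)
The plan is to apply \Cref{lem:one-party} twice, once for each tensor factor, conditioning on one unitary while treating the other. The entry we want is $\Tr\bigl(M\,(U\otimes V)(u_bu_a^*\otimes u_eu_c^*)(U\otimes V)^*\bigr)$, up to conjugation conventions. Since $\Tr_0M=\Tr_1M=0$, we have $\Tr(M(X\otimes \Id))=\Tr(M(\Id\otimes Y))=0$ for all $X,Y$. So we may replace $u_bu_a^*$ by $H_{ab}=u_bu_a^*-\delta_{ab}\Id/d$ and $u_eu_c^*$ by $H_{ce}$ at no cost. The entry therefore equals
\[
\Tr\bigl(M\,(UH_{ab}U^*\otimes VH_{ce}V^*)\bigr)
\]
(with the obvious index and adjoint adjustments). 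This is a bilinear function of the two random matrices $UH_{ab}U^*$ and $VH_{ce}V^*$.

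\textbf{Step 1: condition on $V$.} Fix $V$ and set $Y:=VH_{ce}V^*$. Define the scalar-valued linear map $\mathcal A_Y:L(\C^d)\to\C$ by $\mathcal A_Y(X)=\Tr(M(X\otimes Y))$. Its Frobenius norm is the norm of the vector $\Tr_1(M(\Id\otimes Y))$, suitably adjointed, so
\[
\norm{\mathcal A_Y}_{\Frob}=\norm{\Tr_1(M(\Id\otimes Y))}_{\Frob}.
\]
Applying \Cref{lem:one-party} with $q=2p$ to the randomness in $U$ gives, conditionally on $V$,
\[
\bigl(\E_U|\text{entry}|^{2p}\bigr)^{1/(2p)}\le c\,\frac pd\,\norm{\Tr_1(M(\Id\otimes VH_{ce}V^*))}_{\Frob}.
\]

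\textbf{Step 2: take the $L_{2p}$ norm over $V$.} Now view $Y\mapsto \Tr_1(M(\Id\otimes Y))$ as a linear map $\mathcal B:L(\C^d)\to\mathcal H=L(\C^d)$ with the Frobenius norm on the target. Its Hilbert--Schmidt norm is
\[
\norm{\mathcal B}_{\Frob}^2=\sum_{c',e'}\norm{\Tr_1(M(\Id\otimes u_{e'}u_{c'}^*))}_{\Frob}^2=\norm{M}_{\Frob}^2,
\]
which is just a reshuffling of the entries of $M$. Applying \Cref{lem:one-party} once more with this vector-valued $\mathcal A=\mathcal B$, and using independence of $U$ and $V$ so that the iterated $L_{2p}$ norm equals the joint one, we obtain the bound $c_1 p^2 d^{-2}\norm{M}_{\Frob}$. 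The constraint $p\le c_0 d$ is inherited from \Cref{lem:one-party}, after shrinking $c_0$ by a factor of $2$ so that $q=2p$ lies in the admissible range. The case $p=1$ gives $q=2$, which is allowed.

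\textbf{Main obstacle.} The delicate point is the bookkeeping in Step 2. We must check that the conditional bound from Step 1 is exactly $\norm{\mathcal B(Y)}_{\Frob}$ for a single linear map $\mathcal B$, and we must verify the identity $\norm{\mathcal B}_{\Frob}=\norm{M}_{\Frob}$, keeping track of the adjoints and the partial-transposition conventions in the entry indexing. A second point needs care: after conditioning, the vanishing partial traces are what let $H_{ab}$ (rather than $u_bu_a^*$) appear, and this must hold for every fixed $Y$. It does, because $\Tr_0M=0$ kills the $\Id\otimes Y$ term regardless of $Y$.
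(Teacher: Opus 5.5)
Your proposal is correct and follows essentially the same route as the paper. Both arguments use the vanishing partial traces to reduce the entry to $\Tr\bigl(M(UH_{ab}U^*\otimes VH_{ce}V^*)\bigr)$, apply Lemma~\ref{lem:one-party} conditionally on $V$ to the scalar functional, and then apply it again to the $L(\C^d)$-valued map $Y\mapsto\Tr_1(M(\Id\otimes Y))$ (the paper's $\mathcal L$ up to a transpose), whose Hilbert--Schmidt norm is $\norm{M}_{\Frob}$, with $q=2p$ and $c_0$ halved exactly as you describe.
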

\begin{proof}
Let \(\hat c_0,\hat c_1>0\) be the constants from \cref{lem:one-party}.  Choose the constant
\(c_0\) in the proposition so that \(2c_0\le \hat c_0\) and set $ q:=2p$.
Then \(p\le c_0d\) implies \(q\le \hat c_0d\), so \cref{lem:one-party} may be
applied with exponent \(q\). Expand the expression
\begin{align*}
(u_a\otimes u_c)^*(U\otimes V)^*M(U\otimes V)(u_b\otimes u_e) 
&=\Tr\!\left(
 M\,
 \left(Uu_bu_a^*U^*\right)
 \otimes
 \left(Vu_eu_c^*V^*\right)
 \right)\\
 &= \Tr\!\left(
 M(UH_{ab}U^*)\otimes(VH_{ce}V^*)
 \right) 
\\&\quad+
 \frac{\delta_{ec}}{d}
 \Tr\!\left(
 M(UH_{ab}U^*)\otimes\Id
 \right)
\\&\quad+
 \frac{\delta_{ab}}{d}
 \Tr\!\left(
 M\Id\otimes(VH_{ce}V^*)
 \right)
+
 \frac{\delta_{ab}\delta_{ec}}{d^2}
 \Tr M
 \\
 &=\Tr\!\left(
 M(UH_{ab}U^*)\otimes(VH_{ce}V^*)
 \right)
\end{align*}
where in the last line we used that the last three summands vanish because $\Tr_0 M = \Tr_1 M =0$.
Now fix $V$, and define the operator $\mathcal{L}_V :L(\C^d)\to \C$ as
\[
 \mathcal{L}_V(S)
 :=
 \Tr\!\left(
 M(S\otimes VH_{ce}V^*)
 \right)
 \qquad \forall S\in L(\C^d),
\]
 a scalar linear map. Now, by \Cref{lem:one-party},
 \[
 \norm{\mathcal{L}_V(UH_{ab}U^*)}_{L_q} \leq \hat{c}_1 \frac{q}{d}\norm{\mathcal{L}_V}_{\Frob}.
 \]
Taking $q$-powers, expectation over $V$, and again $1/q$-powers, we obtain 
\[
\norm{\Tr\!\left(
 M(UH_{ab}U^*)\otimes(VH_{ce}V^*)
 \right)}_{L_q} \leq \hat c_1 \frac{q}{d} \left(
\E_V
\|\mathcal L_V\|_{\Frob}^q
\right)^{1/q}.
\] 
It remains to bound the term on the RHS.  Define the linear map $ \mathcal L:L(\C^d)\to L(\C^d)$ by its coefficients
\[
 \mathcal L(T)_{\alpha\beta}
 :=
 \Tr\!\left(M(u_\alpha u_\beta^*\otimes T)\right) \qquad \forall T\in L(\C^d).
\]
Then, by the definition of the Frobenius norm of the scalar map
\(\mathcal L_V\),
\[
\begin{aligned}
 \|\mathcal L_V\|_{\Frob}^2
 =
 \sum_{\alpha,\beta}
 \left|\mathcal L_V(u_\alpha u_\beta^*)\right|^2 =
 \sum_{\alpha,\beta}
 \left|
 \Tr\!\left(
 M(u_\alpha u_\beta^*\otimes VH_{ce}V^*)
 \right)
 \right|^2 =
 \|\mathcal L(VH_{ce}V^*)\|_{\Frob}^2 .
\end{aligned}
\]
Therefore,
\[
\left(
\E_V
\|\mathcal L_V\|_{\Frob}^q
\right)^{1/q}=
\left(
\E_V
\|\mathcal L(VH_{ce}V^*)\|_{\Frob}^q
\right)^{1/q}
=
\|\mathcal L(VH_{ce}V^*)\|_{L_q(L(\C^d))}.
\]
Now apply \Cref{lem:one-party} with Hilbert space
\(\mathcal H=L(\C^d)\) and linear map \(\mathcal L\). This gives
\[
\|\mathcal L(VH_{ce}V^*)\|_{L_q(L(\C^d))}
\le
\hat c_1\frac qd \|\mathcal L\|_{\Frob} =
\hat c_1\frac qd \|M\|_{\Frob} 
\]
 since \(\mathcal L\) is just
a reshuffling of the coefficients of \(M\).
\end{proof}

\subsection{Flat Decomposition of Hermitian Matrices}\label{sec:flat}
We conclude our Haar-moment analysis by proving the existence of a Hermitian decomposition into a \emph{flat} component plus a component with small Frobenius norm. This holds for matrices with vanishing partial traces.

\begin{lemma}\label{lem:trunc-spread}
Let \(M\in L(\C^d\otimes\C^d)\) be Hermitian and satisfy $ \Tr_0 M=\Tr_1 M=0$
and let \(U,V\sim{\rm Haar}(\mathcal U(d))\) independently. For every \(0<\delta<1/2\), with probability at least \(1-\delta^2\), there
is an efficiently-computable Hermitian decomposition
\[
 (U\otimes V)^*M(U\otimes V)=M_{\rm flat}+M_{\rm tail}
\]
such that:
\[
\norm{M_{\rm tail}}_{\Frob}\le \delta\norm{M}_{\Frob} \quad\text{and}\quad\norm{M_{\rm flat}}_{{\rm max}}
 \le
 c_{\rm flat}
 \frac{(1+\log(1/\delta))^2}{d^2}\norm{M}_{\Frob}
\]
for an absolute constant $c_{\rm flat} >0$.
\end{lemma}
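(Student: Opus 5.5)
Let $\hat M := (U\otimes V)^*M(U\otimes V)$. The plan is to truncate its entries at a threshold $\tau$ and bound the truncated part using the moment estimates of \Cref{prop:haar-moment}. Set
\[
 \tau := c_{\rm flat}\frac{(1+\log(1/\delta))^2}{d^2}\norm{M}_{\Frob}.
\]
Define $M_{\rm flat}$ entrywise by $(M_{\rm flat})_{ac,be} := \hat M_{ac,be}$ if $|\hat M_{ac,be}|\le\tau$ and $0$ otherwise, and let $M_{\rm tail}:=\hat M-M_{\rm flat}$. Since $\hat M$ is Hermitian and $|\hat M_{ac,be}|=|\hat M_{be,ac}|$, the truncation is symmetric, so both pieces are Hermitian. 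Both are computable in polynomial time once $U,V$ are sampled. The bound $\norm{M_{\rm flat}}_{\max}\le\tau$ holds by construction, so only the Frobenius norm of the tail needs work.

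For the tail, write $Z:=\hat M_{ac,be}$ for a fixed entry and $\sigma:=\norm{M}_{\Frob}/d^2$. By \Cref{prop:haar-moment}, $\norm{Z}_{L_{2p}}\le c_1 p^2\sigma$ for $1\le p\le c_0 d$. The goal is to bound
\[
\E\norm{M_{\rm tail}}_{\Frob}^2=\sum_{a,b,c,e}\E\bigl[|Z|^2\mathbbm 1_{|Z|>\tau}\bigr].
\]
By Cauchy--Schwarz and Markov at order $2p$,
\[
\E\bigl[|Z|^2\mathbbm 1_{|Z|>\tau}\bigr]\le \norm{Z}_{L_4}^2\,\Pr[|Z|>\tau]^{1/2}\le (c_1 4\sigma)^2\Bigl(\frac{c_1p^2\sigma}{\tau}\Bigr)^{p}.
\]
Taking $p\asymp 1+\log(1/\delta)$ and $c_{\rm flat}$ large, so that $c_1p^2\sigma/\tau\le e^{-2}$, the last factor is at most $e^{-2p}\le \delta^{4}$ after adjusting constants. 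Summing over the $d^4$ entries then gives
\[
\E\norm{M_{\rm tail}}_{\Frob}^2\lesssim d^4\sigma^2\delta^4=O(\delta^4)\norm{M}_{\Frob}^2.
\]
With the constants absorbed this is at most $\delta^4\norm{M}_{\Frob}^2$. Markov's inequality then yields $\Pr[\norm{M_{\rm tail}}_{\Frob}>\delta\norm{M}_{\Frob}]\le\delta^2$, as required.

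The main obstacle is the constraint $p\le c_0 d$ in \Cref{prop:haar-moment}. We need $p\asymp\log(1/\delta)$, which is fine when $\log(1/\delta)\lesssim d$. In the remaining regime, $\delta\le e^{-\Omega(d)}$, I would argue directly. Either set $M_{\rm flat}=\hat M$ and use the deterministic bound $|\hat M_{ac,be}|\le\norm{M}_{\Frob}\le \tau$: this works once $(1+\log(1/\delta))^2\gtrsim d^2$, but that need not hold when only $\log(1/\delta)\gtrsim d$. So the cleaner fix is to use the moment bound at $p=c_0d$ and accept a weaker tail. The factor $(c_1p^2\sigma/\tau)^p$ still decays like $e^{-\Omega(d)}$ but must beat $\delta^4$. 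To make this work I would choose $\tau$ slightly larger in this regime, or note that the deterministic bound gives $\norm{M_{\rm flat}}_{\max}\le\norm{M}_{\Frob}\le c\,(\log(1/\delta))^2\norm{M}_{\Frob}/d^2$ whenever $\log(1/\delta)\ge d$. Since $\log(1/\delta)\ge d$ implies $(\log(1/\delta))^2\ge d^2$, this covers the regime with $M_{\rm tail}=0$. The remaining step is to check that the constants in the two regimes match up, which I expect to be routine.
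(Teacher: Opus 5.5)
Your proposal is correct and follows the paper's proof. The shared steps are: symmetric entrywise thresholding of $\hat M$; a truncated-second-moment bound from \Cref{prop:haar-moment} at order $p\asymp 1+\log(1/\delta)$; summation over the $d^4$ entries and Markov on $\norm{M_{\rm tail}}_{\Frob}^2$; and, when $p>c_0d$, taking $M_{\rm tail}=0$ with $\norm{\hat M}_{\rm max}\le\norm{M}_{\Frob}$. The paper bounds $\mathbb{E}\bigl[|Z|^2\mathbbm{1}_{\{|Z|>\tau\}}\bigr]\le \mathbb{E}|Z|^{2p}/\tau^{2p-2}$ directly rather than via Cauchy--Schwarz, which makes no difference.

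The hesitation in your last paragraph is unnecessary, and the claim there that $(1+\log(1/\delta))^2\gtrsim d^2$ ``need not hold'' when $\log(1/\delta)\gtrsim d$ is false. Let $K$ be the constant with $p\le K(1+\log(1/\delta))$ and define the bad regime as $K(1+\log(1/\delta))>c_0d$. There $(1+\log(1/\delta))^2/d^2>c_0^2/K^2$, so the trivial bound holds once $c_{\rm flat}\ge K^2/c_0^2$, exactly as in the paper. The detour through $p=c_0d$ is not needed.
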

\begin{proof}
Let $c_0,c_1$ be the constants in \cref{prop:haar-moment}. If \(M=0\), the claim is trivial, so we assume without loss of generality that  \(M\neq 0\). Choose once and for all an integer \(m\ge2\) such that
\begin{equation}\label{eq:m-choice-trunc}
 64 c_1^2(m+1)^4\,8^{-2m}\le \frac12 .
\end{equation}
If $d < \frac{(m+1)(\log (1/\delta) +1)}{c_0}$, then we simply take
\[
M_{\rm flat} :=(U\otimes V)^*M(U\otimes V)\quad\text{and}\quad
M_{\rm tail} := 0.
\]
Then
\[
 \norm{M_{\rm flat}}_{{\rm max}}
 \le
 \norm{(U\otimes V)^*M(U\otimes V)}_{\Frob}
 =
 \norm{M}_{\Frob}
\]
by unitary invariance of the Frobenius norm, and the claim holds with $c_{\rm flat} := \frac{({m+1})^2}{{c_0}^2}$, simply because 
\[
\left(\frac{m+1}{c_0}\right)^2\frac{(\log (1/\delta) +1)^2}{d^2} >1.
\]
Suppose instead that $d \geq \frac{(m+1)(\log (1/\delta) +1)}{c_0}$, let $p$ be an integer with 
\begin{equation}\label{eq:p}
   m(1+\log(1/\delta))\leq p\leq (m+1)(1+\log(1/\delta)). 
\end{equation}
Notice that this implies that $p\le c_0d$ and therefore \Cref{prop:haar-moment} applies with this value of \(p\). Fix some $a,b,c,e\in[d]$ and let $Z$ denote the random variable for the corresponding entry of $(U\otimes V)^*M(U\otimes V)$. By Markov inequality and \Cref{prop:haar-moment}, we obtain
\[
\E \abs{Z}^2 \cdot \mathbbm{1}_{\{\abs{Z} >\tau\} } \leq \frac{\E \abs{Z}^{2p}}{\tau^{2p-2}} \leq \frac{(c_1p^2\norm{M}_{\Frob})^{2p}}{d^{4p}\tau^{2p-2}}.
\]
Taking $\tau := 8c_1p^2\norm{M}_{\Frob}/d^2$, we expand
\begin{align}
\E \abs{Z}^2 \cdot \mathbbm{1}_{\{\abs{Z} >\tau\} } &\leq \left(\frac{c_1p^2\norm{M}_{\Frob}}{d^2}\right)^2\cdot8^{-2p+2}\notag \\&\leq \frac{64 \norm{M}_{\Frob}^2c_1^2(m+1)^4(1+\log(1/\delta))^48^{-2m(1+\log(1/\delta))}}{d^4}\notag\\
&\le \frac{1}{2}\frac{\norm{M}_{\Frob}^2}{d^4}\delta^4\label{eq:single-entry-tail}
\end{align}
where in the second inequality we used \Cref{eq:p} and
in the last inequality we used that $m\geq 2$ to bound
\[
(1+\log(1/\delta))^4\,8^{-2m(1+\log(1/\delta))}
\le
8^{-2m}e^{-4\log(1/\delta)}
=
8^{-2m}\delta^4
\]
and then appealed to \Cref{eq:m-choice-trunc}.
Define \(M_{\rm tail}\) entrywise by
\[
 (M_{\rm tail})_{ac,be}
 :=
 (U\otimes V)^*M(U\otimes V)_{ac,be}
 \cdot \mathbbm{1}_{\{|(U\otimes V)^*M(U\otimes V)_{ac,be}|>\tau\}},
\]
and set $M_{\rm flat}:=(U\otimes V)^*M(U\otimes V)-M_{\rm tail}$. Because the matrix $(U\otimes V)^*M(U\otimes V)$ is Hermitian, if the entry indexed by $(ac,be)$ passes the threshold, then so does the entry $(be,ac)$. It follows that $M_{\rm tail}$ is Hermitian, and so is $M_{\rm flat}$. By construction and using \Cref{eq:p},
\[
\norm{M_{\rm flat}}_{\rm max} \leq \tau =8c_1p^2\norm{M}_{\Frob}/d^2 \leq 8c_1(m+1)^2(1+\log(1/\delta))^2\norm{M}_{\Frob}/d^2
\]
so the claim holds with $c_{\rm flat}:= 8c_1(m+1)^2$. It remains to prove the Frobenius tail bound with high probability. By \Cref{eq:single-entry-tail}, in expectation,
\[
\E \norm{M_{\rm tail}}_{\Frob}^2 \leq \frac{1}{2}\norm{M}_{\Frob}^2\delta^4
\]
and the high-probability claim follows by a Markov argument. 
\end{proof}

\section{Discretization}\label{sec:disc-main}

Let $M\in L(\C^d\ot \C^d)$ be a Hermitian matrix with $\Tr_0 M = \Tr_1 M =0$. Throughout this section we assume that $d^2 \norm{M}_{\rm max} \leq \gamma$.
Define the quartic form
\begin{equation}\label{eq:quartic}
Q_M(x,y):=\frac{1}{d^2} (x\ot y)^*M(x\ot y)=\frac1{d^2}\sum_{a,b,c,e}M_{ac,be}{x}^*_ax_b{y}^*_cy_e \in \mathbb{R}.
\end{equation}
Because $M$ is Hermitian, this quartic form is always real-valued.

\subsection{Flat Matrices Have Flat Near-Optimizers}\label{sec:bounded-amp}

We prove that under the above conditions, there exist near-optimizers with small infinity norm for the quartic form in \Cref{eq:quartic}.

\begin{lemma}
 \label{lem:bounded-amplitude}
For every \(0<\eta<1\), there exist polynomials $\kappa, \Delta=\poly(\gamma ,1/\eta)$ such that, if $d \geq \Delta$, then 
\[
\max_{\norm{x}_2=\norm{y}_2=\sqrt d}Q_M(x,y) \leq 
\max_{\substack{\norm{x}_2=\norm{y}_2=\sqrt d\\
\norm{x}_\infty,\norm{y}_\infty\le \kappa}} Q_M(x,y) +\eta.
\]
\end{lemma}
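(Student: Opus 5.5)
The plan is to start from an exact maximizer \((x,y)\) with \(\norm{x}_2=\norm{y}_2=\sqrt d\), cut off its large coordinates, and then restore the norm by adding a flat random-phase vector. The vanishing partial traces will make that padding cost nothing in expectation. Throughout I use only the entrywise bound \(\abs{M_{ac,be}}\le \gamma/d^2\) and the constraints \(\Tr_0 M=\Tr_1 M=0\).

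\emph{Step 1 (truncation).} Let \(S_x=\{a:\abs{x_a}>\kappa\}\) and \(S_y=\{c:\abs{y_c}>\kappa\}\). Write \(x=x^{\rm s}+x^{\rm L}\), where \(x^{\rm L}\) is \(x\) restricted to \(S_x\), and similarly \(y=y^{\rm s}+y^{\rm L}\).
\begin{itemize}
\item Since \(\norm{x}_2^2=d\), we have \(\abs{S_x}\le d/\kappa^2\). Hence \(\norm{x^{\rm L}}_1\le \sqrt{\abs{S_x}}\,\norm{x^{\rm L}}_2\le d/\kappa\).
\item By Cauchy--Schwarz, \(\norm{x}_1,\norm{y}_1\le d\).
\item Expanding \(Q_M(x,y)-Q_M(x^{\rm s},y^{\rm s})\) multilinearly, every term contains at least one large factor. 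Each such term is bounded by
\[
\frac{1}{d^2}\cdot\frac{\gamma}{d^2}\cdot\frac d\kappa\cdot d\cdot d\cdot d=\frac{\gamma}{\kappa}.
\]
\end{itemize}
This gives \(Q_M(x^{\rm s},y^{\rm s})\ge Q_M(x,y)-O(\gamma/\kappa)\), with \(\norm{x^{\rm s}}_\infty,\norm{y^{\rm s}}_\infty\le\kappa\) and \(\norm{x^{\rm s}}_2,\norm{y^{\rm s}}_2\le\sqrt d\). The truncated vectors may have lost a lot of norm, so Step 2 restores it.

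\emph{Step 2 (padding with random phases).} Choose \(t,s\in[0,1]\) with \(\norm{x^{\rm s}}_2^2+t^2d=d\) and \(\norm{y^{\rm s}}_2^2+s^2d=d\). Let \(w=t(\theta_1,\dots,\theta_d)\) and \(z=s(\phi_1,\dots,\phi_d)\), where the \(\theta_a,\phi_c\) are independent uniform random signs or phases. Then \(\E w=0\) and \(\E w_a^*w_b=t^2\delta_{ab}\), and similarly for \(z\).

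Expand \(\E\, Q_M(x^{\rm s}+w,\,y^{\rm s}+z)\):
\begin{itemize}
\item Every term of odd degree in \(w\) or in \(z\) vanishes.
\item A term quadratic in \(w\) contracts \(M\) against \(t^2\delta_{ab}\). This yields \(\sum_a M_{ac,ae}=(\Tr_0M)_{c,e}=0\). Symmetrically, the \(z\)-quadratic terms give \(\Tr_1 M=0\).
\end{itemize}
Hence \(\E\,Q_M(x^{\rm s}+w,y^{\rm s}+z)=Q_M(x^{\rm s},y^{\rm s})\). The padded vectors satisfy \(\norm{\cdot}_\infty\le\kappa+1\).

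\emph{Step 3 (renormalization and concentration).} The squared norm \(\norm{x^{\rm s}+w}_2^2=d+2\,\mathrm{Re}\langle x^{\rm s},w\rangle\) is not exactly \(d\). Its fluctuation has standard deviation \(O(\sqrt d)\), so its relative deviation is \(O(1/\sqrt d)\), and the same holds for \(y\).
\begin{itemize}
\item Rescale to norm exactly \(\sqrt d\). Because \(Q_M\) is homogeneous of bidegree \((2,2)\), this multiplies \(Q_M\) by \(1+O(\abs{\mathrm{Re}\langle x^{\rm s},w\rangle}/d+\abs{\mathrm{Re}\langle y^{\rm s},z\rangle}/d)\).
\item Use the crude bound \(\abs{Q_M}\le\gamma\) for vectors of norm \(O(\sqrt d)\). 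This follows from the \(\ell_1\) estimate \(\abs{Q_M(x,y)}\le \frac{\gamma}{d^4}\norm{x}_1^2\norm{y}_1^2\).
\item On the bad event where the norm deviation is large, I would truncate the scaling, or simply bound by Cauchy--Schwarz using the second moment of the deviation.
\end{itemize}
The expected loss from rescaling is then \(O(\gamma/\sqrt d)\). So some realization achieves value at least \(Q_M(x,y)-O(\gamma/\kappa)-O(\gamma/\sqrt d)\), with \(\infty\)-norm at most \(2(\kappa+1)\).

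Choosing \(\kappa=\Theta(\gamma/\eta)\) and \(\Delta=\Theta(\gamma^2/\eta^2)\) then gives the lemma after renaming \(\kappa\). The main obstacle I expect is Step 3: handling the expectation of a product of the random rescaling factor with \(Q_M\) rigorously. The two are correlated, so a plain expectation argument does not suffice. I would handle this by writing the rescaled value as \(Q_M\cdot(1+E)\), where \(E\) is the rescaling error, and bounding \(\E\abs{Q_M E}\le\gamma\,\E\abs{E}\), which is \(O(\gamma/\sqrt d)\).
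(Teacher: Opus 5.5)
Your plan follows the paper's proof. You truncate at a threshold (your Step 1 is the paper's $4\gamma/\sigma$ estimate). You refill with independent random phases so that $\Tr_0M=\Tr_1M=0$ kills the cross terms in expectation. You then control the norms by Chebyshev and renormalize. One real improvement over the paper: you bound the value by $\abs{Q_M(v,w)}\le \gamma\norm{v}_2^2\norm{w}_2^2/d^2$. The paper instead uses the cruder $\gamma(\sigma+1)^4$, which comes from $\norm{v}_1\le d(\sigma+1)$. That is why the paper needs an extra small parameter $t$ and ends with $\Delta$ of order roughly $(\gamma/\eta)^{15}$, while your route gives $\Delta=O(\max\{1,\gamma^2/\eta^2\})$.

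The step that is not yet an argument is the end of Step 3, and it is exactly the obstacle you flagged. A bound on the \emph{unconditional} expectation of the rescaled value only yields some realization with good value. It does not yield one with $\norm{x^{\rm s}+w}_2^2,\norm{y^{\rm s}+z}_2^2\ge d/4$, and that is what your $\infty$-norm claim $2(\kappa+1)$ needs. On the bad event the rescaling factor is uncontrolled, so the normalized vectors can have large entries.

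Your bound $\E\abs{Q_M E}\le\gamma\E\abs{E}$ has a related problem. Put $a=\norm{x^{\rm s}+w}_2^2/d$ and $b=\norm{y^{\rm s}+z}_2^2/d$. The bound is only right with $Q_M$ the \emph{normalized} value and $E=ab-1$. If $Q_M$ is the unnormalized value and $1+E=1/(ab)$, second-moment bounds alone do not control $\E\abs{E}$.

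The fix is short:
\begin{itemize}
\item Let $R$ be the normalized value and $X$ the unnormalized one. Then $\abs{R}\le\gamma$ always, since the normalized vectors have norm exactly $\sqrt d$.
\item Since $X=abR$, you get $\abs{R-X}\le\gamma\abs{1-ab}$. Hence $\E R\ge Q_M(x^{\rm s},y^{\rm s})-O(\gamma/\sqrt d)$, using $a,b\in[0,2]$ and your variance bound.
\item Restrict to $G=\{\abs{a-1},\abs{b-1}\le1/2\}$. Chebyshev gives $\Pr(G^c)=O(1/d)$, and $\E[R\mid G]\ge\E R-2\gamma\Pr(G^c)/\Pr(G)$.
\end{itemize}
So some realization in $G$ has value at least $Q_M(x,y)-O(\gamma/\kappa)-O(\gamma/\sqrt d)$ and $\infty$-norm at most $\sqrt2(\kappa+1)$. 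This conditioning is precisely the role of the paper's good event (called $E$ there).
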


\begin{proof}
If $\gamma=0$, then $M=0$ and the conclusion is immediate. We henceforth assume $\gamma>0$. We prove the claim by showing that, for every feasible pair $x,y$, there is another feasible pair $\hat{x}, \hat{y}$ with $\norm{\hat x}_\infty,\norm{\hat y}_\infty \leq \kappa$ such that
    \[
    Q_M(x,y) \leq Q_M(\hat x,\hat y) + \eta.
    \]
Let $\sigma$ be a threshold value to be determined later, define the truncated vectors $\tilde x,\tilde y$ coordinatewise
    \[
    \tilde x_a := x_a \mathbbm{1}_{\{\abs{x_a} \leq \sigma\}} \quad\text{and}\quad
    \tilde y_c := y_c \mathbbm{1}_{\{\abs{y_c} \leq \sigma\}}.
    \]
Using \(d^2\norm{M}_{\rm max}\le \gamma\), we bound
\[
\begin{aligned}
&\abs{Q_M(x,y)-Q_M(\tilde x,\tilde y)} \\
&=
\left|
\frac1{d^2}
\sum_{a,b,c,e}
M_{ac,be}
\left(
{x}^*_ax_b{y}^*_cy_e
-
\tilde{x}^*_a\tilde x_b\tilde{y}^*_c\tilde y_e
\right)
\right| 
\\
&\le
\left|
\frac1{d^2}
\sum_{a,b,c,e}
M_{ac,be}
\left(
{x}^*_ax_b-\tilde{x}_a^*\tilde x_b
\right)
{y}^*_cy_e
\right| +
\left|
\frac1{d^2}
\sum_{a,b,c,e}
M_{ac,be}
\tilde{x}^*_a\tilde x_b
\left(
{y}^*_cy_e-\tilde{y}^*_c\tilde y_e
\right)
\right| \\
&\le
\frac{\gamma}{d^4}
\left(
 \norm{x-\tilde x}_1\norm{x}_1+\norm{\tilde x}_1\norm{x-\tilde x}_1
\right)
\norm{y}_1^2 +
\frac{\gamma}{d^4}
\norm{\tilde x}_1^2
\left(
 \norm{y-\tilde y}_1\norm{y}_1+\norm{\tilde y}_1\norm{y-\tilde y}_1
\right) \\
&\le
\frac{\gamma}{d^4}
\left(
 \frac d\sigma\cdot d+d\cdot\frac d\sigma
\right)d^2
+
\frac{\gamma}{d^4}
d^2
\left(
 \frac d\sigma\cdot d+d\cdot\frac d\sigma
\right) =
\frac{4\gamma}{\sigma}
\end{aligned}
\]
where the last line uses $\norm{x}_1,\norm{\tilde x}_1 \leq \sqrt{d}\norm{x}_2 \leq d$ and, by Cauchy--Schwarz,
\[
\norm{x - \tilde{x}}_1\le \sqrt{\abs{{\rm Supp}(x - \tilde{x})}}\norm{x - \tilde{x}}_2 \le \sqrt{\frac{d}{\sigma^2}}\sqrt{d} =\frac{d}{\sigma}.
\]
Next, we refill the missing norm of the truncated vectors. Let \(\xi,\zeta\in\C^d\) be independent random vectors where each coordinate is sampled iid from the unit circle. Define
\[
\breve x:= \tilde x + \sqrt{1-\frac{\norm{\tilde x}_2^2}{d}}\xi
\quad\text{and}\quad
\breve y:= \tilde y + \sqrt{1-\frac{\norm{\tilde y}_2^2}{d}}\zeta
\]
then clearly $\norm{\breve x}_\infty,\norm{\breve y}_\infty\leq \sigma +1$. 
Independence of \(\xi\) and \(\zeta\) gives
\begin{equation}\label{eq:cancellations2}
 \E\!\left(
 \breve{x}^*_a \breve{x}_b
 \breve{y}^*_c \breve{y}_e
 \right)
 =
 \E\!\left(
 \breve{x}^*_a \breve{x}_b\right)\E\!\left(
 \breve{y}^*_c \breve{y}_e
 \right)
 =
 \left(\tilde{x}^*_a\tilde x_b+\left(1 -\frac{\norm{\tilde x}_2^2}{d} \right)\delta_{ab}\right)
 \left(\tilde{y}^*_c \tilde y_e+\left(1 -\frac{\norm{\tilde y}_2^2}{d} \right)\delta_{ce}\right)
\end{equation}
since $\E {\xi}^*_a\xi_b = \delta_{ab}$ and $\E \xi_a = 0$, and similarly for $\zeta$.
We claim that such refilling does not change the value of the quartic form, in expectation. Indeed,
 by \Cref{eq:cancellations2},
\[
\begin{aligned}
 \E Q_M(\breve x,\breve y)
 &=
 Q_M(\tilde x, \tilde y) +
 \frac1{d^2}\left(1-\frac{\norm{\tilde x}_2^2}{d}\right)
 \sum_{a,c,e}
 M_{ac,ae}\,
 \tilde{y}^*_c\tilde y_e\\
 &\quad+
 \frac1{d^2}\left(1-\frac{\norm{\tilde y}_2^2}{d}\right)
 \sum_{a,b,c}
 M_{ac,bc}\,
 \tilde{x}^*_a\tilde x_b+
 \frac1{d^2}\left(1-\frac{\norm{\tilde x}_2^2}{d}\right)
 \left(1-\frac{\norm{\tilde y}_2^2}{d}\right)
 \sum_{a,c}M_{ac,ac}\\
 &=
 Q_M(\tilde x,\tilde y).
\end{aligned}
\]
where the second summand vanishes because $\sum_a M_{ac,ae}=0$ for all $c,e$ (as a consequence of $\Tr_0 M=0$), the third summand vanishes because $\sum_c M_{ac,bc}=0$ for all $a,b$ (as a consequence of $\Tr_1 M=0$), and the fourth one vanishes by either identity.

Next, we argue that, with good probability, the refilled vectors have nearly correct norms. Expanding,
\[
 \norm{\breve x}_2^2
 =
 \norm{\tilde x}_2^2+\left(1-  \frac{\norm{\tilde x}_2^2}{d}\right) d
 +2\sqrt{1-  \frac{\norm{\tilde x}_2^2}{d}}\,\operatorname{Re}\ip{\tilde x}{\xi}
 =
 d+2\sqrt{1-\frac{\norm{\tilde x}_2^2}{d}}\,\operatorname{Re}\ip{\tilde x}{\xi}.
\]
Then Chebyshev's inequality gives, for every \(t>0\),
\begin{equation}
\label{eq:norm-good-u}
 \Pr\left(
 \left|\frac{\norm{\breve x}_2^2}{d}-1\right|>t
 \right)=
 \Pr\left(
 \left|\frac{2}{d}\sqrt{1-\frac{\norm{\tilde x}_2^2}{d}}\,\operatorname{Re}\ip{\tilde x}{\xi}\right|>t
 \right)
 \le
 \frac{4}{t^2d}
\end{equation}
where we used that $\E\abs{\ip{\tilde x}{\xi}}^2=\norm{\tilde x}_2^2\le d$ and therefore 
\[\mathbb E\left|
\frac{2}{d}\sqrt{1-\frac{\|\tilde x\|_2^2}{d}}\operatorname{Re}\langle \tilde x,\xi\rangle
\right|^2
\le
\frac4d.\]
The same estimate works for $\breve y$, so the probability that both deviations are at most $t$ is at least $1 - \frac{8}{t^2d}$. 
Call such a good event $E$. We then argue that, conditioned on $E$, the refilling does not change the value of the quartic form by a lot. 
First, notice that 
\begin{equation}\label{eq:bound}    
\abs{Q_M(\tilde x,\tilde y)},\abs{Q_M(\breve x,\breve y)}\leq \gamma (\sigma+1)^4
\end{equation}
Indeed, for any \(v,w\) with
\(\norm{v}_\infty,\norm{w}_\infty\le \sigma+1\), we have
$\abs{Q_M(v,w)}
\le
\frac{\gamma}{d^4}\norm{v}_1^2\norm{w}_1^2
\le
\gamma(\sigma+1)^4$.
Therefore
\begin{align*}
\E\left(
Q_M(\breve{x},\breve{y})
\mid E
\right) 
&=
\frac{
\E\left(
\left(Q_M(\breve{x},\breve{y})\right)\mathbbm 1_E\right)
}{\Pr(E)} =
\frac{
Q_M(\tilde{x},\tilde{y})
-
\E\left(
\left(Q_M(\breve{x},\breve{y})\right)\mathbbm 1_{\bar{E}}\right)
}{\Pr(E)} \\
&\ge
\frac{
Q_M(\tilde{x},\tilde{y})
-
\gamma(\sigma+1)^4\Pr(\bar{E})
}{\Pr(E)}\ge Q_M(\tilde{x},\tilde{y})
-
\frac{
16\gamma(\sigma+1)^4
}{
t^2d\left(1-\frac{8}{t^2d}\right)
}
\end{align*}
using the fact that, in expectation, the refilling does not change the value of the quartic form and \Cref{eq:bound} twice.
Hence, if $\frac{8}{t^2d}\le \frac12$, there must exist an outcome in $E$ such that 
\[
Q_M(\breve{x},\breve{y})
\ge
Q_M(\tilde{x},\tilde{y})
-
\frac{
32\gamma(\sigma+1)^4
}{
t^2d
}.
\]
Fix such $(\breve{x},\breve{y})$ and define
\[
\hat x := \frac{\sqrt{d}}{\norm{\breve{x}}_2} \breve{x} \quad\text{and}\quad
\hat y := \frac{\sqrt{d}}{\norm{\breve{y}}_2} \breve{y}.
\]
By definition of $E$, we know that $1-t
\le
\frac{\norm{\breve{x}}_2^2}{d},
\frac{\norm{\breve{y}}_2^2}{d}
\le
1+t$, consequently
\[
\|\hat x\|_\infty
=
\frac{\sqrt d}{\|\breve{x}\|_2}\|\breve{x}\|_\infty
\le
\frac{\sqrt d}{\sqrt{d(1-t)}}\|\breve{x}\|_\infty
=
\frac{\|\breve{x}\|_\infty}{\sqrt{1-t}} \leq \frac{\sigma+1}{\sqrt{1-t}}
\]
and similarly for $\norm{\hat y}_\infty$.
Assuming that \(t\le1/2\), \Cref{eq:bound} gives
\[
\begin{aligned}
Q_M(\hat x,\hat y)
&=
\frac{d^2}{\norm{\breve{x}}_2^2\norm{\breve{y}}_2^2}Q_M(\breve{x},\breve{y})\ge
Q_M(\breve{x},\breve{y})
-
\left|\frac{d^2}{\norm{\breve{x}}_2^2\norm{\breve{y}}_2^2}-1\right|\abs{Q_M(\breve{x},\breve{y})}\ge
Q_M(\breve{x},\breve{y})
-
8t\,\gamma(\sigma+1)^4
\end{aligned}
\]
where we used
\[
\left|\frac{1}{xy}-1\right|\le 8t
\quad\text{for }x,y\in[1-t,1+t],\ t\le1/2.
\]
Combining with the above estimates, we obtain that
\begin{align*}
Q_M(\hat x,\hat y)\geq Q_M( x, y) 
-
\frac{4\gamma}{\sigma}
-\left(\frac{32}{t^2d} + 8t\right)\gamma(\sigma+1)^4.
\end{align*}
It remains to set the parameters. We choose:
\begin{itemize}
    \item  $\sigma:=\max\left\{1,\frac{12\gamma}{\eta}\right\}$,
    \item $t:= \min\left\{
\frac12,\,
\frac{\eta}{
24\gamma(\sigma+1)^4
}
\right\}$, and
\item $d\ge \Delta:=
\max\left\{
\frac{16}{t^2},\,
\frac{
96\gamma(\sigma+1)^4
}{
\eta t^2
}
\right\}$.
\end{itemize}
With these parameter choices, $Q_M(\hat x,\hat y)
\ge
Q_M(x,y)-\eta$.
Moreover, $\norm{\hat x}_2=\norm{\hat y}_2=\sqrt d$ by construction, and 
\[
\norm{\hat x}_\infty,\norm{\hat y}_\infty
\le
\frac{\sigma+1}{\sqrt{1-t}}
\le
\sqrt{2}(\sigma+1).
\]
Thus the claim holds with $\kappa:=\sqrt{2}(\sigma+1) =\poly(\gamma,1/\eta)$ and $\Delta = \poly(\gamma,1/\eta)$.
\end{proof}

\subsection{Dimension-Free Discretization}\label{sec:discr}

Let $\mathbb{D} \subset \C$ be the (closed) unit disk in the Euclidean metric. For a parameter $\varepsilon >0 $, let $\mathcal{E}$ be an $\varepsilon$-net of $\kappa\mathbb{D}$ contained in $\kappa\mathbb{D}$. For instance, choosing the (suitably scaled) square lattice and adding an $\varepsilon$-net on the boundary, we can choose $\mathcal E$ such that 
\[
\abs{\mathcal{E}} \leq c_{\rm net} \max\left\{1,\frac{\kappa^2}{\varepsilon^2}\right\},
\]
for an absolute constant $c_{\rm net}$. Let $w,x,y,z\in \C^d$ satisfy the following properties:
\begin{itemize}
\item $\norm{w}_\infty, \norm{x}_\infty, \norm{y}_\infty, \norm{z}_\infty \leq \kappa$,
\item $\norm{x-w}_\infty \leq \varepsilon$, and
\item $\norm{y-z}_\infty \leq \varepsilon$.
\end{itemize}
We show that the quartic form in \Cref{eq:quartic} is Lipschitz-continuous in their inputs.
\begin{lemma}\label{lem:obj-lipschitz}
    $\abs{Q_M(x,y) - Q_M(w,z)} \leq 4 \gamma \kappa^3 \varepsilon$.
\end{lemma}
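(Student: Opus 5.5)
We will prove the bound with a telescoping argument over the four factors of the quartic monomial, using only the entrywise bound $d^2\norm{M}_{\rm max}\le\gamma$ and $\ell_\infty$ estimates. We do not use the partial-trace condition. Write
\[
x^*_ax_by^*_cy_e-w^*_aw_bz^*_cz_e
=(x_a-w_a)^*x_by^*_cy_e+w^*_a(x_b-w_b)y^*_cy_e+w^*_aw_b(y_c-z_c)^*y_e+w^*_aw_bz^*_c(y_e-z_e).
\]
Inserting this into the definition \Cref{eq:quartic} splits $Q_M(x,y)-Q_M(w,z)$ into four sums of the form $\frac1{d^2}\sum_{a,b,c,e}M_{ac,be}\,\alpha_a\beta_b\gamma'_c\delta'_e$. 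In each sum exactly one of the four vectors is a difference vector. That vector has $\ell_\infty$ norm at most $\varepsilon$ by hypothesis, and each of the other three has $\ell_\infty$ norm at most $\kappa$.

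Next we bound each sum crudely. By the triangle inequality, each sum is at most
\[
\frac{1}{d^2}\norm{M}_{\rm max}\,\norm{\alpha}_1\norm{\beta}_1\norm{\gamma'}_1\norm{\delta'}_1\le \frac{\gamma}{d^4}\cdot d^4\,\varepsilon\kappa^3=\gamma\kappa^3\varepsilon,
\]
where we use $\norm{v}_1\le d\norm{v}_\infty$ for every $v\in\C^d$. This is the same $\ell_1$ estimate already used in \Cref{eq:bound}, and it converts $\norm{M}_{\rm max}\le\gamma/d^2$ into a dimension-free bound. Adding the four contributions gives $\abs{Q_M(x,y)-Q_M(w,z)}\le 4\gamma\kappa^3\varepsilon$.

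There is no real obstacle here. The only point needing care is to arrange the telescoping so that every mixed factor involves only vectors with $\ell_\infty$ norm at most $\kappa$. The ordering above achieves this, since the factors drawn from $x,w,y,z$ are all covered by the first hypothesis. Conjugation does not affect the moduli, so the complex case causes no extra difficulty.
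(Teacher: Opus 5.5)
Your proof is correct and follows the paper's argument: the paper uses the same four-term telescoping of $x_a^*x_by_c^*y_e-w_a^*w_bz_c^*z_e$, with $\ell_\infty$ bounds $\kappa$ on the undifferenced factors and $\varepsilon$ on the differenced one, and then uses $|M_{ac,be}|\le\gamma/d^2$ over the $d^4$ index quadruples. The only cosmetic difference is the order of steps. The paper bounds each monomial difference pointwise by $4\kappa^3\varepsilon$ and then sums; you split into four sums first and bound each via $\|v\|_1\le d\|v\|_\infty$. Both give the same constant.
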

\begin{proof}
By the triangle inequality
\[
\begin{aligned}
&\left|
x_a^*x_by_c^*y_e-
w_a^*w_bz_c^*z_e
\right|\\
&=
\left|
(x_a^*x_b-w_a^*w_b)y_c^*y_e
+
w_a^*w_b(y_c^*y_e-z_c^*z_e)
\right|\\
&\le
\left(
|x_a-w_a|\,|x_b|
+
|w_a|\,|x_b-w_b|
\right)
|y_c|\,|y_e|
+
|w_a|\,|w_b|
\left(
|y_c-z_c|\,|y_e|
+
|z_c|\,|y_e-z_e|
\right)\\
&\le
(\varepsilon \kappa+\varepsilon \kappa)\kappa^2
+
\kappa^2(\varepsilon \kappa+\varepsilon \kappa)
=
4\kappa^3\varepsilon
\end{aligned}
\]
and therefore $\abs{Q_M(x,y)-Q_M(w,z)} \leq 4\gamma \kappa^3\varepsilon$.
\end{proof}
For $\lambda >0$, define the \emph{penalized} objective function as
\begin{equation}\label{eq:penalized}
\Lambda_{M}(x,y) := Q_M(x,y) -\lambda\left( \frac1d \sum_{a}\abs{x_a}^2-1\right)^2-\lambda\left( \frac1d \sum_{c}\abs{y_c}^2-1\right)^2.
\end{equation}
We prove a similar continuity bound of the penalized objective function.
\begin{lemma}\label{lem:lambda-Lipschitz}
$\abs{\Lambda_{M}(x,y) - \Lambda_{M}(w, z)}\leq 4 \gamma \kappa^3 \varepsilon + 8\lambda \kappa(\kappa^2+1)\varepsilon$.
\end{lemma}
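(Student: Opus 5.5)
The plan is to split $\Lambda_M$ into the quartic part and the two penalty terms, bound each difference separately, and combine them with the triangle inequality. The quartic part is handled directly by \Cref{lem:obj-lipschitz}, which gives $\abs{Q_M(x,y)-Q_M(w,z)}\le 4\gamma\kappa^3\varepsilon$ under exactly the standing hypotheses on $w,x,y,z$. It therefore remains to show that each penalty difference is at most $4\lambda\kappa(\kappa^2+1)\varepsilon$; the two penalties then contribute the stated $8\lambda\kappa(\kappa^2+1)\varepsilon$.

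For the $x$-penalty, set $s:=\frac1d\sum_a\abs{x_a}^2-1$ and $r:=\frac1d\sum_a\abs{w_a}^2-1$. We factor
\[
\lambda\abs{s^2-r^2}=\lambda\abs{s-r}\,\abs{s+r}.
\]
Since $\norm{x}_\infty,\norm{w}_\infty\le\kappa$, both averages lie in $[0,\kappa^2]$, so $\abs{s},\abs{r}\le\max\{1,\kappa^2-1\}\le \kappa^2+1$. This gives $\abs{s+r}\le 2(\kappa^2+1)$.

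For the other factor we bound coordinatewise:
\[
\bigl|\abs{x_a}^2-\abs{w_a}^2\bigr|=\bigl|\abs{x_a}-\abs{w_a}\bigr|\,(\abs{x_a}+\abs{w_a})\le \varepsilon\cdot 2\kappa.
\]
Averaging over $a$ yields $\abs{s-r}\le 2\kappa\varepsilon$. Multiplying the two factors gives $\lambda\abs{s^2-r^2}\le 4\lambda\kappa(\kappa^2+1)\varepsilon$.

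The $y$-penalty is handled identically, using $\norm{y-z}_\infty\le\varepsilon$ and $\norm{y}_\infty,\norm{z}_\infty\le\kappa$. Summing the three contributions gives the claimed bound.

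None of these steps is a real obstacle. The only point needing care is the crude bound $\abs{s+r}\le 2(\kappa^2+1)$, which must hold without assuming $\kappa\ge1$. Writing $\abs{s}\le \frac1d\sum_a\abs{x_a}^2+1\le\kappa^2+1$ settles this directly.
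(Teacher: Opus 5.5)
Your proposal is correct and follows the paper's proof essentially step for step. You invoke \Cref{lem:obj-lipschitz} for the quartic part, bound the difference of the averaged squared norms by $2\kappa\varepsilon$, and factor each penalty difference as a difference of squares with both bases bounded by $\kappa^2+1$, which gives $4\lambda\kappa(\kappa^2+1)\varepsilon$ per penalty.
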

\begin{proof}
We bound the difference between the penalty terms. First notice that,
\[
 \abs{\frac1d\sum_a |x_a|^2-\frac1d\sum_a |w_a|^2}
 \le
  \frac1d\sum_a\abs{ |x_a|^2-|w_a|^2}
  \le
\frac1d\sum_a
    |x_a-w_a|(|x_a|+|w_a|)
  \le
 2\kappa\varepsilon.
\]
Consequently, expanding the difference between the first penalty terms we obtain
\[
\begin{aligned}
&\abs{\lambda\left(\frac1d\sum_a |x_a|^2-1\right)^2-\lambda\left(\frac1d\sum_a |w_a|^2-1\right)^2}\\ &\leq
 \lambda\abs{\frac1d\sum_a |x_a|^2-\frac1d\sum_a |w_a|^2}
 \left(
 \abs{\frac1d\sum_a |x_a|^2-1}+\abs{\frac1d\sum_a |w_a|^2-1}
 \right) \\
&\le
 2\lambda \kappa\varepsilon\cdot 2(\kappa^2+1)
 =
 4\lambda \kappa(\kappa^2+1)\varepsilon.
\end{aligned}
\]
The same calculation gives the exact same bound for the other penalty term, and the difference between the quartic forms is bounded by \Cref{lem:obj-lipschitz}.
\end{proof}
We conclude this section by showing that restricting solutions to have coefficients in the $\varepsilon$-net, does not change the value of the objective function substantially.

\begin{lemma}\label{lem:dimension-free-discretization}
For every \(0<\eta<1\), there are parameters $\kappa,\lambda,1/\varepsilon, \Delta
=
\poly(\gamma,1/\eta)$ such that, for all $d\geq \Delta$,
\[
\left|
\max_{\|x\|_2=\|y\|_2=\sqrt d}Q_M(x,y)
-
\max_{x,y\in\mathcal E^d}\Lambda_{M}(x,y)
\right|
\le \eta .
\]
\end{lemma}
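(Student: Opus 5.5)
We prove the two inequalities separately, with $\kappa$ and $\Delta_0$ taken from \Cref{lem:bounded-amplitude} applied with accuracy $\eta/3$, and with $\lambda$ and $\varepsilon$ fixed at the end. Write $\mathrm{OPT}:=\max_{\|x\|_2=\|y\|_2=\sqrt d}Q_M(x,y)$.

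\emph{Lower bound on the discrete maximum.} By \Cref{lem:bounded-amplitude}, if $d\ge\Delta_0$ there is a feasible pair $(x,y)$ with $\|x\|_\infty,\|y\|_\infty\le\kappa$ and $Q_M(x,y)\ge\mathrm{OPT}-\eta/3$. Round each coordinate to a nearest point of $\mathcal E$; this gives $w,z\in\mathcal E^d$ with $\|x-w\|_\infty,\|y-z\|_\infty\le\varepsilon$. Both penalty terms vanish at $(x,y)$, so $\Lambda_M(x,y)=Q_M(x,y)$. Then \Cref{lem:lambda-Lipschitz} gives
\[
\Lambda_M(w,z)\ge \mathrm{OPT}-\eta/3-4\gamma\kappa^3\varepsilon-8\lambda\kappa(\kappa^2+1)\varepsilon.
\]
This is at least $\mathrm{OPT}-\eta$ once $\varepsilon\le \eta/\big(6(4\gamma\kappa^3+8\lambda\kappa(\kappa^2+1))\big)$, which is $1/\poly(\gamma,1/\eta)$ provided $\lambda=\poly(\gamma,1/\eta)$.

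\emph{Upper bound on the discrete maximum.} Fix any $w,z\in\mathcal E^d$ and set $s:=\|w\|_2^2/d$ and $r:=\|z\|_2^2/d$. Both lie in $[0,\kappa^2]$. We split into two cases.

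\begin{itemize}
\item \textbf{Both norms near $1$.} Suppose $|s-1|,|r-1|\le\theta$ for some $\theta\le 1/2$. Assume first that $s,r>0$, and rescale to $\hat w:=w/\sqrt s$ and $\hat z:=z/\sqrt r$. These are feasible for the continuous problem, and $Q_M$ is homogeneous of degree $2$ in each argument, so
\[
Q_M(w,z)=s\,r\,Q_M(\hat w,\hat z).
\]
Using $|Q_M(w,z)|\le\gamma\kappa^4$ (as in \Cref{eq:bound}), we get $Q_M(w,z)\le Q_M(\hat w,\hat z)+O(\theta)\,\gamma\kappa^4\le \mathrm{OPT}+O(\theta\gamma\kappa^4)$. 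Since the penalties are nonnegative, $\Lambda_M(w,z)\le Q_M(w,z)$, and the bound follows.
\item \textbf{Some norm far from $1$.} Otherwise the penalty is at least $\lambda\theta^2$, while $Q_M(w,z)\le\gamma\kappa^4$ for any such pair. We also need a lower bound $\mathrm{OPT}\ge -\eta/3$. To get it, take $x,y$ uniformly random on the unit circle coordinatewise. Then $\E Q_M(x,y)=\frac1{d^2}\sum_{a,c}M_{ac,ac}=0$, using the vanishing partial traces, so some feasible pair has $Q_M\ge 0$ and hence $\mathrm{OPT}\ge0$. Choosing $\lambda:=(\gamma\kappa^4+1)/\theta^2$ then forces $\Lambda_M(w,z)\le -1<\mathrm{OPT}$.
\end{itemize}

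In the first case, the degenerate subcase $s=0$ cannot occur when $\theta<1$.

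\emph{Choice of parameters.} Take $\theta:=\min\{1/2,\ c\,\eta/(\gamma\kappa^4)\}$ so that the rescaling error in the first case is at most $\eta$. Then set $\lambda:=(\gamma\kappa^4+1)/\theta^2$, next $\varepsilon$ as in the lower-bound step, and $\Delta:=\Delta_0$. All of these are $\poly(\gamma,1/\eta)$, since $\kappa$ is.

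The main obstacle is the second case: controlling discrete points whose norms are off. The order of choices matters here. $\lambda$ must be large relative to $\gamma\kappa^4/\theta^2$, and only afterwards can $\varepsilon$ be chosen small relative to $\lambda$, so that the Lipschitz loss in the lower-bound step stays below $\eta$. The lower bound $\mathrm{OPT}\ge0$ coming from the trace-free structure of $M$ is what makes the comparison in the second case clean.
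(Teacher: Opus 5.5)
Your proof is correct. The lower-bound half matches the paper's: you round a bounded-amplitude near-optimizer into $\mathcal E^d$, note that the penalties vanish there, and apply \Cref{lem:lambda-Lipschitz}.

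The upper-bound half takes a different route. You split discrete points according to whether $\|w\|_2^2/d$ and $\|z\|_2^2/d$ lie within $\theta$ of $1$. Near-normalized points are rescaled at an $O(\theta\gamma\kappa^4)$ cost. The remaining points are eliminated by a penalty of at least $\lambda\theta^2$, compared against $\mathrm{OPT}\ge 0$.

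The paper avoids the case split. It normalizes every discrete point and bounds the rescaling error linearly in $\bigl|\|x\|_2^2/d-1\bigr|$ and $\bigl|\|y\|_2^2/d-1\bigr|$. It then absorbs that error with $ab-\lambda b^2\le a^2/(4\lambda)$. The result is the uniform inequality $Q_M(\tilde x,\tilde y)\ge \Lambda_M(x,y)-\gamma^2(1+\kappa^4)/(4\lambda)$ of \Cref{claim:1}.

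The paper's route buys two things:
\begin{itemize}
\item A smaller penalty weight: $\lambda\approx\gamma^2\kappa^4/\eta$, against your $\lambda\approx\gamma^3\kappa^{12}/\eta^2$. This gives a coarser net, though both choices are polynomial.
\item A rounding guarantee for every point of $\mathcal E^d$. The paper reuses it in \Cref{thm:final} to convert the CSP assignment into unit vectors. Your case analysis gives the analogous guarantee only for points with $\Lambda_M>-1$, which still covers the near-optimal assignments needed there.
\end{itemize}

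Your random-phase argument for $\mathrm{OPT}\ge 0$ is valid but avoidable. Since $|Q_M|\le\gamma$ on the sphere, $\mathrm{OPT}\ge-\gamma$ holds trivially. Adding $\gamma$ to the numerator of $\lambda$ then makes your second case go through without using $\Tr M=0$.
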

\begin{proof}
Apply \Cref{lem:bounded-amplitude} with accuracy \(\eta/4\), obtaining
 \(\kappa,\Delta=\poly(\gamma,1/\eta)\) such that, for all \(d\geq \Delta\),
 \begin{equation}\label{eq:recall}     
 \max_{\|x\|_2=\|y\|_2=\sqrt d} Q_M(x,y) \leq 
 \max_{\substack{\|x\|_2=\|y\|_2=\sqrt d\\
\|x\|_\infty,\|y\|_\infty\le \kappa}} Q_M(x,y)
+\eta/4.
 \end{equation}
Choose 
\begin{itemize}
    \item $\lambda
:=
1+\frac{\gamma^2+\gamma^2 \kappa^4}{\eta}$ and
\item $\varepsilon := \frac{\eta}{16\gamma\kappa^3
+
32\lambda \kappa(\kappa^2+1)}$.
\end{itemize}
First we prove the lower bound. Let \(x,y\) be a maximizer of the RHS in \Cref{eq:recall}. 
Choose \(w,z\in\mathcal E^d\) by rounding them coordinatewise, so that
\[
\|x-w\|_\infty,\|y-z\|_\infty\le \varepsilon.
\]
Since \(\norm{x}_\infty,\norm{y}_\infty,\norm{w}_\infty,\norm{z}_\infty\) are all bounded by \(\kappa\), \Cref{lem:lambda-Lipschitz} gives $|\Lambda_{M}(x,y)-\Lambda_{M}(w,z)| \le \eta/4$. But notice that \(\Lambda_{M}(x,y)= Q_M(x,y)\),  because \(x,y\) are exactly normalized. Hence the lower bound follows with $\eta/2$.

It remains to prove the upper bound. Fix arbitrary \(x,y\in\mathcal E^d\), and define
\[
 \tilde{x}:=
 \begin{cases}
 \sqrt d\,x/\norm{x}_2, & x\neq 0,\\
 \sqrt d\,u_1, & x=0,
 \end{cases}
 \quad\text{and}\quad
 \tilde{y}:=
 \begin{cases}
 \sqrt d\,y/\norm{y}_2, & y\neq 0,\\
 \sqrt d\,u_1, & y=0.
 \end{cases}
\]
\begin{claim}\label{claim:1}
    $Q_M(\tilde x, \tilde y) \geq \Lambda_{M}(x,y) -\frac{ \gamma^2 + \gamma^2\kappa^4}{4\lambda}$.
\end{claim}
Note that $\tilde x$ and $\tilde y$ are exactly normalized, so this implies that the same bound holds for the maximizers of the objective function, and the bound follows by our choice of $\lambda$.

It remains to prove \Cref{claim:1}. Indeed,
\[
\begin{aligned}
Q_M(x,y) &=  \frac{\|x\|_2^2}{d}\frac{\|y\|_2^2}{d}Q_M(\tilde x, \tilde y)\le Q_M(\tilde x,\tilde  y) +
\gamma\left|
\frac{\|x\|_2^2}{d}\frac{\|y\|_2^2}{d}-1
\right|\\
&\le Q_M(\tilde x,\tilde  y)
+
\gamma
\left|\frac{\|x\|_2^2}{d}-1\right|
+
\gamma \kappa^2
\left|\frac{\|y\|_2^2}{d}-1\right|.
\end{aligned}
\]
The first inequality follows from the bound $|Q_M(\tilde x,\tilde y)|\le \gamma$, which in turn follows using \(\|\tilde x\|_1,\|\tilde y\|_1\le d\) and the entrywise bounds on \(M\).
The second inequality uses 
\[
\left|
\frac{\|x\|_2^2}{d}\frac{\|y\|_2^2}{d}-1
\right|
=
\left|
\frac{\|x\|_2^2}{d}
\left(\frac{\|y\|_2^2}{d}-1\right)
+
\left(\frac{\|x\|_2^2}{d}-1\right)
\right|
\le
\left|\frac{\|x\|_2^2}{d}-1\right|
+
\kappa^2\left|\frac{\|y\|_2^2}{d}-1\right|
\]
which holds because $\frac{\|x\|_2^2}{d} \le \kappa^2$. Consequently,
\[
\begin{aligned}
\Lambda_{M}(x,y)
&\le
Q_M(\tilde x,\tilde  y) +
\gamma
\left|\frac{\|x\|_2^2}{d}-1\right|
-\lambda\left(\frac{\|x\|_2^2}{d}-1\right)^2 +
\gamma \kappa^2
\left|\frac{\|y\|_2^2}{d}-1\right|
-\lambda\left(\frac{\|y\|_2^2}{d}-1\right)^2\\
&\le
Q_M(\tilde x,\tilde  y)
+
\frac{\gamma^2+\gamma^2 \kappa^4}{4\lambda}
\end{aligned}
\]
where the second inequality uses \(ab-\lambda b^2\le a^2/(4\lambda)\) for any \(a,b\ge0\).
\end{proof}

\section{The Main Algorithm}\label{sec:algo}

\subsection{Dense Constraint Satisfaction Problems}\label{sec:csp}

We recall the definition of a dense Max-\(k\)-CSP instance.

\begin{definition}[Dense Max-\(k\)-CSP]
Let \(n,k\in\mathbb N\).  A dense Max-\(k\)-CSP instance $\mathcal I$ consists of:
\begin{itemize}
    \item Variables \(z_1,\ldots,z_n\), each taking values in a finite alphabet
    \(\Omega\).
    \item For each ordered \(k\)-tuple \((i_1,\ldots,i_k)\in[n]^k\), a predicate $P_{i_1,\ldots,i_k}:\Omega^k\to[0,1]$.
\end{itemize}
Repeated indices are allowed.  The value of an assignment
\(z=(z_1,\ldots,z_n)\in\Omega^n\) is
\[
 \val_{\mathcal I}(z)
 :=
 \frac1{n^k}
 \sum_{i_1,\ldots,i_k\in[n]}
 P_{i_1,\ldots,i_k}(z_{i_1},\ldots,z_{i_k}).
\]
The optimum value is $\OPT(\mathcal I):=\max_{z\in\Omega^n}\val_{\mathcal I}(z)$.
\end{definition}
The following shows that dense Max-\(k\)-CSP admit a polynomial-time approximation scheme (PTAS) when $k$ and $\abs{\Omega}$ are constant. This was shown first in 
\cite{Arora1995DensePTAS}, see also \cite[Theorem 5]{Yoshida2014SheraliAdams} for the exact statement used here.

\begin{theorem}
\label{thm:akk-dense-csp}
Fix \(k\), \(\abs{\Omega}\), and \(\delta>0\).  There is an algorithm which, given a dense
Max-\(k\)-CSP instance \(\mathcal I\) on \(n\) variables with alphabet size
\(\abs{\Omega}\) and predicates in \([0,1]\), outputs an assignment \(z\in\Omega^n\)
satisfying, with probability at least $\frac23$,
\[
 \val_{\mathcal I}(z)\ge \OPT(\mathcal I)-\delta.
\]
For fixed $k$,$\abs{\Omega}$, and $\delta$, the running time is polynomial in \(n\).
\end{theorem}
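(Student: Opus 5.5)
This statement is the dense Max-$k$-CSP PTAS, quoted from \cite{Arora1995DensePTAS} and \cite[Theorem 5]{Yoshida2014SheraliAdams}. The paper invokes it rather than proving it, so I would not rederive it here. The natural proof is the exhaustive-sampling argument of Arora, Karger and Karpinski, adapted to a general finite alphabet and to $[0,1]$-valued predicates. I first reduce to the case where the instance is given by a polynomial of degree $k$ with bounded coefficients. Encode each variable $z_i\in\Omega$ by indicators $t_{i,\omega}\in\{0,1\}$ with $\sum_\omega t_{i,\omega}=1$. Then
\[
n^k\,\val_{\mathcal I}(z)=\sum_{i_1,\ldots,i_k}\sum_{\omega_1,\ldots,\omega_k}P_{i_1,\ldots,i_k}(\omega_1,\ldots,\omega_k)\,t_{i_1,\omega_1}\cdots t_{i_k,\omega_k}.
\]
This is a degree-$k$ polynomial whose coefficients lie in $[0,1]$ and whose number of terms is $O(n^k)$. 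The number of terms hidden in the constant is $|\Omega|^k$, which is fixed.

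Next I would linearize recursively. For a fixed optimal assignment $z^\star$, expand the polynomial as $\sum_{i,\omega} t_{i,\omega}\,c_{i,\omega}(t)$, where $c_{i,\omega}$ is a degree-$(k-1)$ polynomial of size $O(n^{k-1})$. I would sample a uniformly random multiset $S$ of $s=O(\log n/\delta^2)$ variables (constants depending on $k$ and $|\Omega|$) and enumerate all $|\Omega|^{s}=n^{O(1/\delta^2)}$ assignments on $S$. One of these matches $z^\star|_S$. For that guess, sampling estimates every $c_{i,\omega}(t^\star)$ to within additive error $\delta n^{k-1}$, for all $i,\omega$ simultaneously. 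This follows from Hoeffding's inequality, applied recursively to the degree-$(k-1)$ pieces, together with a union bound over the $n|\Omega|$ pairs $(i,\omega)$.

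With these estimates $\hat c_{i,\omega}$ in hand, I would write a linear program in fractional variables $t_{i,\omega}\ge0$ with $\sum_\omega t_{i,\omega}=1$. It maximizes $\sum t_{i,\omega}\hat c_{i,\omega}$ subject to $|c_{i,\omega}(t)-\hat c_{i,\omega}|\le\delta n^{k-1}$, with the constraints themselves linearized recursively in the same way. The optimal assignment $z^\star$ is feasible for this LP, so the LP optimum is at least $\OPT(\mathcal I)\,n^k-O(\delta n^k)$. I then round the fractional solution independently per variable, choosing $z_i=\omega$ with probability $t_{i,\omega}$. Because the instance is dense, each variable influences the objective by at most $O(n^{k-1})$. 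A bounded-differences (Azuma) argument on the degree-$k$ polynomial therefore shows the rounded value is within $O(\delta n^k)$ of the fractional value with constant probability. Finally I take the best assignment over all guesses. Rescaling $\delta$ by a constant depending on $k$ and $|\Omega|$ gives the stated guarantee, and repeating the whole procedure boosts the success probability above $2/3$.

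I expect the main obstacle to be the recursive concentration for $k>2$. The estimate of each inner coefficient must hold uniformly over all $n|\Omega|$ coordinates, and the errors introduced at each level of linearization must be controlled at every depth. This is where I would rely on the cited formulations, which handle it via the Sherali--Adams or regularity-style analysis of \cite{Yoshida2014SheraliAdams}, rather than redoing the bookkeeping.
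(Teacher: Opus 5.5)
This matches the paper, which does not prove the theorem but imports it from \cite{Arora1995DensePTAS} and \cite[Theorem 5]{Yoshida2014SheraliAdams}; your outline is the standard Arora--Karger--Karpinski argument behind that citation and is sound at the level of a sketch (indicator encoding, exhaustive guessing on a sample of size $O(\log n/\delta^2)$, a recursively linearized LP, and independent rounding with a bounded-differences bound). The one step you leave implicit is that the recursive constraints are what force the true polynomial at the LP optimum to stay within $O(\delta n^k)$ of the linearized LP objective, and this closeness is what the rounding step relies on.
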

Notice that one can amplify the success probability of the algorithm by invoking $\ell$ independent runs, then return the assignment $z$ with largest $\val_{\mathcal I}(z)$. Such assignment is $\delta$-close to the optimum, except with probability $3^{-\ell}$.

\subsection{The Optimization Algorithm}\label{sec:opt}

Starting from a Hermitian matrix
\(M\in L(\C^d\otimes \C^d)\) with \(\|M\|_{\Frob}\le 1\), we construct a dense
Max-\(4\)-CSP whose optimum encodes the (penalized) product-state
optimization problem. Fix an accuracy parameter $\eta >0$ and an error parameter $0<\delta<\frac12$, the reduction proceeds as follows:
\begin{itemize}
\item[(i)] Decompose $M$, as in \Cref{eq:local-decomp}, into
    \[
    M=\alpha\Id\otimes\Id+A\otimes\Id+\Id\otimes B+C
    \]
    where $A,B$ are traceless and $\Tr_0 C = \Tr_1 C = 0$.
    \item[(ii)] Sample $U,V\sim \mathrm{Haar}(\mathcal{U}(d))$ and define
    \[
    \hat{M} := (U\otimes V)^*M(U\otimes V).
    \]
    Consequently, define $\hat{C} := (U\otimes V)^*C(U\otimes V)$.
\item[(iii)] Apply the decomposition induced by \Cref{lem:trunc-spread} with parameter $\delta$ to $\hat C = C_{\rm flat} + C_{\rm tail}$. If the decomposition fails, abort.
\item[(iv)] 

Define the Hilbert–Schmidt orthogonal projection onto the subspace with vanishing partial traces:
\[
\Pi(X) :=
X-\frac1d\Tr_1(X)\otimes \Id
-\Id\otimes\frac1d\Tr_0(X)
+\frac{\Tr X}{d^2}\Id\otimes \Id,
\]
and set $D:=\Pi(C_{\mathrm{flat}})$ and let $\gamma := d^2\norm*{D}_{\rm max}$.

\item[(v)] Apply \Cref{lem:dimension-free-discretization} with target accuracy \(\eta\), to obtain parameters $\kappa,\lambda,1/\varepsilon,\Delta=\poly(\gamma,1/\eta)$. Define $s:=\gamma \kappa^4
 +2\lambda(\kappa^2+1)^2+1$, and let \(\mathcal E\subset \kappa\mathbb D\) be an \(\varepsilon\)-net of the disk of radius $\kappa$.  
 
\item[(vi)] We now encode our optimization problem as a dense instance of Max-\(4\)-CSP.  The CSP has \(d\) variables $z_i\in\Omega:=\mathcal E\times\mathcal E$ for all $i\in[d]$, so an assignment consists of $d$-many pairs \(z_i=(x_i,y_i)\). For each ordered quadruple \((a,b,c,e)\in[d]^4\), define the predicate $ P_{a,b,c,e}:\Omega^4\to[0,1]$
by
\begin{align*}
&P_{a,b,c,e}(z_a,z_b,z_c,z_e)
:=
\\&\frac12+\frac{1}{2s}\left( 
\operatorname{Re}\!\left(
d^2 D_{ac,be}\,
{x}^*_ax_b
{y}^*_cy_e
\right)
-\lambda\left(|x_a|^2-1\right)
        \left(|x_b|^2-1\right) 
-\lambda\left(|y_c|^2-1\right)
 \left(|y_e|^2-1\right)\right).
\end{align*}
\end{itemize}
We show that the alphabet size of the CSP instance is bounded.
\begin{lemma}\label{lem:alphabet}
    $|\Omega| = \poly(1/\eta, \log(1/\delta))$.
\end{lemma}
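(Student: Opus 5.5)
Since $\Omega=\mathcal E\times\mathcal E$, we have $|\Omega|=|\mathcal E|^2\le c_{\rm net}^2\max\{1,\kappa^4/\varepsilon^4\}$. By \Cref{lem:dimension-free-discretization}, $\kappa$ and $1/\varepsilon$ are $\poly(\gamma,1/\eta)$. So it suffices to show that $\gamma=d^2\norm{D}_{\rm max}=\poly(\log(1/\delta))$ holds with an absolute implicit constant; composing the polynomials then gives the claim.

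To bound $\gamma$, I will start from \Cref{lem:trunc-spread} applied to $C$. We have $\norm{C}_{\Frob}\le\norm{M}_{\Frob}\le 1$ by \Cref{eq:orthogonal-decomp-norm}. On the success event, this gives $d^2\norm{C_{\rm flat}}_{\rm max}\le c_{\rm flat}(1+\log(1/\delta))^2$.

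Next I need to show that the projection $\Pi$ increases the max-norm by at most a constant factor. Take a single entry $(ac,be)$ of $\Pi(X)$.
\begin{itemize}
\item The term $\frac1d\Tr_1(X)\otimes\Id$ has entry $\frac1d\delta_{ce}\sum_{c'}X_{ac',bc'}$. This is an average of $d$ entries, so its modulus is at most $\norm{X}_{\rm max}$.
\item The term $\Id\otimes\frac1d\Tr_0(X)$ is bounded by $\norm{X}_{\rm max}$ in the same way.
\item The term $\frac{\Tr X}{d^2}\Id\otimes\Id$ is an average of $d^2$ diagonal entries, so it is also at most $\norm{X}_{\rm max}$.
\end{itemize}
The triangle inequality then gives $\norm{\Pi(X)}_{\rm max}\le 4\norm{X}_{\rm max}$. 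Hence $\gamma\le 4c_{\rm flat}(1+\log(1/\delta))^2$.

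The main obstacle is this max-norm bound for $\Pi$, but it is elementary because each correction term is an average of entries of $X$. One subtlety remains: the parameters in \Cref{lem:dimension-free-discretization} are polynomials in $\gamma$, and $\gamma$ may be very small. To keep the bounds uniform, I would replace $\gamma$ by the upper bound $\max\{1,4c_{\rm flat}(1+\log(1/\delta))^2\}$. This is legitimate because all earlier lemmas only use $d^2\norm{D}_{\rm max}\le\gamma$. With that substitution, $|\Omega|=\poly(1/\eta,\log(1/\delta))$.
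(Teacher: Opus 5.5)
Your proposal is correct and follows the paper's proof. You bound $|\Omega|$ by $c_{\rm net}^2\kappa^4/\varepsilon^4$, and then bound $\gamma\le 4d^2\|C_{\rm flat}\|_{\rm max}\le 4c_{\rm flat}(1+\log(1/\delta))^2$ using \Cref{lem:trunc-spread} and $\|C\|_{\Frob}\le 1$. You also make explicit two points the paper leaves implicit: the entrywise-averaging argument behind its ``triangle inequality'' step for $\Pi$, and the fact that the parameters can be taken monotone in $\gamma$.
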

\begin{proof}
  The alphabet size is
\[
 |\Omega|=|\mathcal E|^2
 \le
 c_{\rm net}^2\frac{\kappa^4}{\varepsilon^4}.
\]
Since both $\kappa$ and $1/\varepsilon$ are polynomials in $\gamma$ and $1/\eta$, it suffices to show that $\gamma = \poly(\log(1/\delta))$. By the triangle inequality and \Cref{lem:trunc-spread},
\[
\gamma = d^2\norm{D}_{\rm max} \leq 4d^2\norm{C_{\rm flat}}_{\rm max} \le c_{\rm flat}{4(1+\log(1/\delta))^2}\norm*{\hat C}_{\Frob}\leq c_{\rm flat}{4(1+\log(1/\delta))^2}
\]
using $\norm*{\hat C}_{\Frob} \leq 1$ since $\norm*{\hat M}_{\Frob} = \norm*{M}_{\Frob}\leq 1$.
\end{proof}
Next, we prove that each predicate is indeed bounded.
\begin{lemma}\label{lem:01}
    For any assignment $z \in \mathcal{E}^d\times \mathcal{E}^d$ and any quadruple $(a,b,c,e)\in[d]^4$, \[P_{a,b,c,e}(z_a,z_b,z_c,z_e) \in [0,1].\]
\end{lemma}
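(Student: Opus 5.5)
The plan is to show that the quantity inside the parentheses of $P_{a,b,c,e}$ has absolute value at most $s$. Then $P_{a,b,c,e}=\tfrac12+\tfrac1{2s}(\cdot)$ lies in $[\tfrac12-\tfrac12,\tfrac12+\tfrac12]=[0,1]$. By the triangle inequality it suffices to bound the three summands separately, using only $z_a,z_b,z_c,z_e\in\Omega=\mathcal E\times\mathcal E$ and $\mathcal E\subset\kappa\mathbb D$. The latter gives $|x_a|,|x_b|,|y_c|,|y_e|\le\kappa$.

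\emph{Quartic term.} By definition $\gamma=d^2\norm{D}_{\rm max}$, so $|d^2D_{ac,be}|\le\gamma$. Together with the amplitude bounds this gives
\[
\left|\operatorname{Re}\left(d^2D_{ac,be}\,x_a^*x_by_c^*y_e\right)\right|\le \gamma\kappa^4 .
\]

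\emph{Penalty terms.} For any $u\in\kappa\mathbb D$ we have $\big||u|^2-1\big|\le \max\{\kappa^2,1\}\le \kappa^2+1$. Hence each product $\lambda(|x_a|^2-1)(|x_b|^2-1)$ and $\lambda(|y_c|^2-1)(|y_e|^2-1)$ is bounded in absolute value by $\lambda(\kappa^2+1)^2$. Since $\lambda>0$, the two penalties together contribute at most $2\lambda(\kappa^2+1)^2$.

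Summing the bounds, the bracket has modulus at most $\gamma\kappa^4+2\lambda(\kappa^2+1)^2<s$, because $s$ includes the extra $+1$. This concludes the argument.

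There is no real obstacle here. The only point to check is that $\gamma$ in step (iv) is defined through $D=\Pi(C_{\rm flat})$, which is exactly the matrix appearing in the predicate, so the entrywise bound applies directly. The extra $+1$ in $s$ also keeps $s>0$ even when $\gamma=0$, so the predicate is well defined.
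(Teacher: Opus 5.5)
Your proposal is correct and matches the paper's proof: you bound the quartic term by $\gamma\kappa^4$ using $|d^2D_{ac,be}|\le\gamma$ and $\mathcal E\subset\kappa\mathbb D$, and you bound each penalty term by $\lambda(\kappa^2+1)^2$, so the bracket has modulus at most $s$. Your extra remarks are correct and only make explicit what the paper leaves implicit: that $\bigl||u|^2-1\bigr|\le\max\{\kappa^2,1\}$, and that $s\ge1$ keeps the predicate well defined.
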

\begin{proof}
Since $\mathcal{E} \subset \kappa\mathbb D$, \[\abs{
 d^2D_{ac,be} x^*_a x_b y^*_c y_e
 }
 \le \gamma \kappa^4.\]
Moreover, $\abs{(|x_a|^2-1)(|x_b|^2-1)}
 \le (\kappa^2+1)^2$ and similarly for the $y$ term. 
Hence the absolute value of the expression inside the parentheses defining \(P_{a,b,c,e}\) is at most
\[
 \gamma \kappa^4+2\lambda(\kappa^2+1)^2
 \le s.
\]
Thus the predicate lies in $[0,1]$.
\end{proof}
We show that the value of the Max-$4$-CSP corresponds to the penalized objective function defined in \Cref{eq:penalized}.

\begin{lemma}\label{lem:lambda}
    For any assignment $z = (x,y) \in \mathcal{E}^d\times \mathcal{E}^d$, $\val_{\mathcal I}(z)
 = \frac{1}{2} + \frac{1}{2s}\Lambda_{D}(x,y)$.
\end{lemma}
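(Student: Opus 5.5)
This is a direct computation: average the predicates over all $d^4$ quadruples and match each term against $\Lambda_D$ as defined in \Cref{eq:penalized}, with $n=d$ and $k=4$. By linearity,
\[
\val_{\mathcal I}(z)=\frac1{d^4}\sum_{a,b,c,e}P_{a,b,c,e}(z_a,z_b,z_c,z_e)=\frac12+\frac1{2s}\left(T_1-T_2-T_3\right),
\]
where $T_1,T_2,T_3$ are the $d^{-4}$-normalized sums of the three pieces inside the parentheses. The constant $\frac12$ survives unchanged because the normalization is exactly $d^{-4}$ over $d^4$ terms.

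\emph{Quartic term.} By linearity of $\operatorname{Re}$,
\[
T_1=\operatorname{Re}\frac1{d^2}\sum_{a,b,c,e}D_{ac,be}\,x_a^*x_by_c^*y_e=\operatorname{Re}Q_D(x,y).
\]
Since $D=\Pi(C_{\rm flat})$ is Hermitian, $Q_D$ is real, as noted after \Cref{eq:quartic}, so $T_1=Q_D(x,y)$. To confirm $D$ is Hermitian: $C_{\rm flat}$ is Hermitian by \Cref{lem:trunc-spread}, and $\Pi$ preserves Hermiticity because partial traces of Hermitian matrices are Hermitian and $\Tr X$ is then real. I will also record that $\Tr_0D=\Tr_1D=0$ and $d^2\norm{D}_{\rm max}=\gamma$, so $D$ satisfies the standing hypotheses of \Cref{sec:disc-main} and $\Lambda_D$ is meaningful there. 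The projection identities follow by a direct check, e.g.
\[
\Tr_1\Pi(X)=\Tr_1X-\Tr_1X-\frac{\Tr X}{d}\Id+\frac{\Tr X}{d}\Id=0,
\]
and symmetrically for $\Tr_0$.

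\emph{Penalty terms.} The $x$-penalty does not depend on $c,e$, so summing over those indices contributes a factor $d^2$, and the remaining double sum over $a,b$ factorizes:
\[
T_2=\frac{\lambda}{d^2}\left(\sum_a(|x_a|^2-1)\right)^2=\lambda\left(\frac1d\sum_a|x_a|^2-1\right)^2.
\]
The identical computation with $c,e$ in place of $a,b$ gives $T_3=\lambda\bigl(\frac1d\sum_c|y_c|^2-1\bigr)^2$.

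Combining, $T_1-T_2-T_3=\Lambda_D(x,y)$, which is the claim. No step is a real obstacle; the only point needing care is the factorization of each penalty into the square of a normalized sum, which is exactly the step that lets the quadratic penalty be written as a $4$-ary predicate.
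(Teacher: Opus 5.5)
Your proof is correct and follows the paper's own argument: you expand the $d^{-4}$-average of the predicates, identify the quartic part with $Q_D(x,y)$ using that $D=\Pi(C_{\rm flat})$ is Hermitian, and factor each penalty sum into $\lambda\bigl(\frac1d\sum_a|x_a|^2-1\bigr)^2$. Your extra checks that $\Pi$ kills both partial traces and that $d^2\norm{D}_{\rm max}=\gamma$ are accurate, but this lemma only needs the Hermiticity of $D$.
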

\begin{proof}
    Expanding the definition of the value we have
\[
\val_{\mathcal I}(z)
 =
 \frac1{d^4}
 \sum_{a,b,c,e\in[d]}
 P_{a,b,c,e}(z_a,z_b, z_c, z_e)
 \]
and expanding each summand we can see that it indeed corresponds to a term in the $\Lambda_D(x,y)$ function. For the quartic term
\[
 \frac1{d^4}\sum_{a,b,c,e}\operatorname{Re}\!\left(
d^2 D_{ac,be}\,
x^*_ax_b
y^*_cy_e
\right) =  \frac1{d^2}\sum_{a,b,c,e}\operatorname{Re}\!\left(
D_{ac,be}\,
x^*_ax_b
y^*_cy_e
\right) = Q_{D}(x,y)
\]
since $D$ is Hermitian because $C_{\rm flat}$ is and the projection $\Pi$ preserves this property.
Also,
\[
 \frac1{d^4}\sum_{a,b,c,e}
 (|x_a|^2-1)(|x_b|^2-1)
 =
 \frac1{d^2}\sum_{a,b}
 (|x_a|^2-1)(|x_b|^2-1)
 =
 \left(\frac1d\sum_a |x_a|^2-1\right)^2,
\]
and similarly for the \(y\)-penalty term.
\end{proof}
We show that the optimum of the Max-$4$-CSP is close to the separable support function.
\begin{lemma}\label{lem:hSep-bound} If $d \geq \Delta$, then
$\abs{
h_{\Sep}(M)
-
\left(\alpha+2s\left(\OPT(\mathcal I)-\frac12\right)\right)
}
\le \delta+\eta + \frac{2}{\sqrt{d}}$.
\end{lemma}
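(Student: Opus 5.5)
We chain the reductions in order, tracking the additive error of each one. The target is
\[
h_{\Sep}(M) \approx \alpha + h_{\Sep}(C) = \alpha + h_{\Sep}(\hat C) \approx \alpha + h_{\Sep}(D) = \alpha + \max_{\norm{x}_2=\norm{y}_2=\sqrt d} Q_D(x,y) \approx \alpha + \max_{x,y\in\mathcal E^d}\Lambda_D(x,y),
\]
and then to read off $\max\Lambda_D = 2s(\OPT(\mathcal I)-\frac12)$ from \Cref{lem:lambda}.

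\textbf{Step 1: local terms.} By \Cref{eq:local-decomp}, for unit $x,y$ we have $(x\otimes y)^*M(x\otimes y)=\alpha+x^*Ax+y^*By+(x\otimes y)^*C(x\otimes y)$. Moreover $|x^*Ax|\le\norm{A}_{\rm op}\le\norm{A}_{\Frob}\le d^{-1/2}$, and likewise for $B$. Taking maxima on both sides gives $|h_{\Sep}(M)-\alpha-h_{\Sep}(C)|\le 2/\sqrt d$.

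\textbf{Step 2: rotation and truncation.} Local unitaries permute product states, so $h_{\Sep}(C)=h_{\Sep}(\hat C)$. Since the algorithm did not abort, \Cref{lem:trunc-spread} gives $\hat C=C_{\rm flat}+C_{\rm tail}$ with $\norm{C_{\rm tail}}_{\Frob}\le\delta\norm{\hat C}_{\Frob}\le\delta$, where $\norm{\hat C}_{\Frob}\le\norm{M}_{\Frob}\le 1$ by \Cref{eq:orthogonal-decomp-norm}.

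We then pass from $C_{\rm flat}$ to $D=\Pi(C_{\rm flat})$. Because $\Pi$ is the orthogonal projection and $\Pi(\hat C)=\hat C$ (rotation by $U\otimes V$ preserves vanishing partial traces), we get $\hat C - D=\Pi(C_{\rm tail})$, so $\norm{\hat C-D}_{\Frob}\le\delta$. This is the key point: we never need to control $C_{\rm flat}-D$ directly. For any $X$ we have $|h_{\Sep}(X)|\le\norm{X}_{\rm op}\le\norm{X}_{\Frob}$, and $h_{\Sep}$ is subadditive, so $|h_{\Sep}(\hat C)-h_{\Sep}(D)|\le\delta$. Also $D$ is Hermitian with vanishing partial traces and $d^2\norm{D}_{\rm max}=\gamma$, so $D$ satisfies the standing hypotheses of \Cref{sec:disc-main}.

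\textbf{Step 3: rescaling and discretization.} Rescaling $x,y$ to norm $\sqrt d$ gives $h_{\Sep}(D)=\max_{\norm{x}_2=\norm{y}_2=\sqrt d}Q_D(x,y)$, since the factor $d^2$ from $\norm{x}_2^2\norm{y}_2^2$ cancels the $1/d^2$ in \Cref{eq:quartic}. Since $d\ge\Delta$ and the parameters $\kappa,\lambda,\varepsilon,\mathcal E$ are exactly those from \Cref{lem:dimension-free-discretization} applied to $D$ with accuracy $\eta$, that lemma gives $|h_{\Sep}(D)-\max_{x,y\in\mathcal E^d}\Lambda_D(x,y)|\le\eta$.

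\textbf{Step 4: the CSP optimum.} By \Cref{lem:lambda}, $\val_{\mathcal I}(z)=\frac12+\frac1{2s}\Lambda_D(x,y)$ for every assignment. The assignments range exactly over $\mathcal E^d\times\mathcal E^d$, so $\max_{x,y}\Lambda_D=2s(\OPT(\mathcal I)-\frac12)$. Summing the errors from Steps 1–3 by the triangle inequality gives the total bound $\delta+\eta+2/\sqrt d$.

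\textbf{Main obstacle.} The delicate point is Step 2: the tail bound must be transferred to $D$ rather than $C_{\rm flat}$, because $C_{\rm flat}$ need not have vanishing partial traces, so the discretization lemma cannot be applied to it directly. The identity $\hat C-D=\Pi(C_{\rm tail})$, together with the contractivity of $\Pi$, resolves this. Everything else is routine bookkeeping of the parameter choices.
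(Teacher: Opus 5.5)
Your proposal is correct and follows essentially the same route as the paper. You strip the local terms at cost $2/\sqrt d$, use $\hat C - D = \Pi(C_{\rm tail})$ with contractivity of $\Pi$ to pay $\delta$, and then invoke the discretization lemma and the identity $\max_{x,y\in\mathcal E^d}\Lambda_D(x,y) = 2s(\OPT(\mathcal I)-\tfrac12)$; the only cosmetic differences are that you remove the local terms before rotating (the paper rotates first and bounds $\hat A,\hat B$), and you phrase the $\delta$ step via subadditivity of $h_{\Sep}$ rather than a pointwise bound over product states.
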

\begin{proof}
First, local unitaries preserve the separable support function, since \(x,y\) range over all unit vectors if and only if \(Ux,Vy\) do. Hence $h_{\Sep}(M)=h_{\Sep}(\hat M)$.
For every product unit vector \(x\otimes y\),
\begin{align}
\abs{(x\otimes y)^*(\hat M-\alpha \Id \otimes \Id - \hat C)(x\otimes y)} &=
\abs{(x\otimes y)^*
  (\hat A\otimes \Id+\Id\otimes \hat B)
  (x\otimes y)}
=\abs{x^*\hat Ax+y^*\hat By}\notag\\
&\le \|\hat A\|_{\mathrm{op}}+\|\hat B\|_{\mathrm{op}}\le \|A\|_{\mathrm F}+\|B\|_{\mathrm F}\le \frac{2}{\sqrt d}\label{eq:hatM}
\end{align}
where $\hat A = U^*AU$ and $\hat B = V^*BV$. Taking suprema gives $\abs*{h_{\Sep}(\hat M)-\alpha - h_{\Sep}(\hat C)} \le \frac{2}{\sqrt d}$. Similarly, for all units $x\otimes y$,
\begin{equation}\label{eq:C-D}
\abs{(x\otimes y)^*(\hat C - D)(x\otimes y)} = \abs{(x\otimes y)^*\Pi(C_{\rm tail})(x\otimes y)} \leq \norm{C_{\rm tail}}_{\Frob} \leq \delta 
\end{equation}
by \Cref{lem:trunc-spread} and using the fact that $\Pi$ is contractive and \[\hat C - D = \hat C - \Pi(C_{\rm flat}) = \Pi(\hat C - C_{\rm flat}) = \Pi(C_{\rm tail}).\] Therefore $\abs*{h_{\Sep}(\hat C)-h_{\Sep}( D)} \le \delta$. Using \Cref{eq:quartic}, we rewrite
\[
h_{\Sep}( D) = \max_{\norm{x}_2=\norm{y}_2=1} (x \ot y)^* D (x\ot y) = \max_{\norm{x}_2=\norm{y}_2=\sqrt d}
 Q_{ D}(x,y).
\]
By \Cref{lem:dimension-free-discretization}, for $d\geq \Delta$,
\[
\left|
 \max_{\norm{x}_2=\norm{y}_2=\sqrt d}
 Q_{ D}(x,y)
 -
 \max_{x,y\in\mathcal E^d}\Lambda_{D}(x,y)
\right|
\le \eta.
\]
Finally, by \Cref{lem:lambda},
\[
 \max_{x,y\in\mathcal E^d}\Lambda_{D}(x,y)
 =
 2s\left(\OPT(\mathcal I)-\frac12\right).
\]
Combining the previous estimates with a triangle inequality gives the desired bound.
\end{proof}
Finally, we are ready to state our main theorem.
\begin{theorem}\label{thm:final}
For any constant $0<\tau \leq1$, there exists a randomized algorithm that, on input a Hermitian matrix $M\in L(\mathbb C^d \ot \mathbb C^d)$ with $\norm{M}_{\Frob}\leq 1$, returns a pair of unit vectors $x,y$ such that
    \[
    (x\ot y)^* M (x\ot y)\geq h_{\Sep}(M) -\tau
    \]
    with probability at least $2/3$.
    For a fixed $\tau$, the runtime is polynomial in $d$.
\end{theorem}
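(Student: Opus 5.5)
The plan is to run the reduction of \Cref{sec:opt} with parameters tied to $\tau$, solve the resulting CSP with \Cref{thm:akk-dense-csp}, and map the returned assignment back to actual unit vectors. We fix $\eta:=\tau/10$ and $\delta:=\min\{1/4,\tau/10\}$, so that $\gamma=\poly(\log(1/\delta))$, the quantities $\kappa,\lambda,1/\varepsilon,s$, and by \Cref{lem:alphabet} also $|\Omega|$, are all constants depending only on $\tau$. We split into two cases according to $d$. If $d<\max\{\Delta,(20/\tau)^2\}$, a constant, we solve the problem by brute force over a constant-size net of the product of two unit spheres. Otherwise we run steps (i)--(vi). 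Step (iii) aborts with probability at most $\delta^2\le 1/16$. We then invoke \Cref{thm:akk-dense-csp} with accuracy $\tau/(20s)$, amplified so that it fails with probability at most $1/16$. This yields $z=(x,y)\in\mathcal E^d\times\mathcal E^d$ with $\val_{\mathcal I}(z)\ge\OPT(\mathcal I)-\tau/(20s)$. By \Cref{lem:lambda}, this means $\Lambda_D(x,y)\ge\max_{\mathcal E^d}\Lambda_D-\tau/10$.

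Next we round. Define $\tilde x,\tilde y$ as in the proof of \Cref{lem:dimension-free-discretization}, normalized to norm $\sqrt d$. \Cref{claim:1} together with the choice of $\lambda$ gives $Q_D(\tilde x,\tilde y)\ge\Lambda_D(x,y)-\eta/4$. Combining this with \Cref{lem:dimension-free-discretization}, we get $Q_D(\tilde x,\tilde y)\ge h_{\Sep}(D)-\tau/10-\eta-\eta/4$. Here we use that \Cref{claim:1} holds for every pair in $\mathcal E^d$, not only for maximizers. Set $x':=\tilde x/\sqrt d$ and $y':=\tilde y/\sqrt d$, which are unit vectors with $(x'\otimes y')^*D(x'\otimes y')=Q_D(\tilde x,\tilde y)$. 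Finally, output $\hat x:=Ux'$ and $\hat y:=Vy'$.

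To certify the output, we reuse the pointwise bounds from the proof of \Cref{lem:hSep-bound}, which are exactly what is needed because they hold for every product unit vector and not only at the supremum. By \Cref{eq:C-D}, $(x'\otimes y')^*\hat C(x'\otimes y')\ge(x'\otimes y')^*D(x'\otimes y')-\delta$. By \Cref{eq:hatM},
\[
(\hat x\otimes\hat y)^*M(\hat x\otimes\hat y)=(x'\otimes y')^*\hat M(x'\otimes y')\ge\alpha+(x'\otimes y')^*\hat C(x'\otimes y')-\tfrac{2}{\sqrt d}.
\]
On the other side, $h_{\Sep}(M)\le\alpha+h_{\Sep}(\hat C)+2/\sqrt d\le\alpha+h_{\Sep}(D)+\delta+2/\sqrt d$. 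Chaining these gives a total loss of at most $2\delta+2\cdot\tfrac{2}{\sqrt d}+\tau/10+\tfrac54\eta\le\tau$ with the chosen constants, since $4/\sqrt d\le\tau/5$. A union bound over the abort event and the CSP failure gives success probability at least $2/3$.

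For the runtime, decomposition, Haar sampling, truncation, the projection $\Pi$, and writing down the $d^4$ predicates are all polynomial in $d$. The CSP solver is polynomial for fixed $k=4$, fixed $|\Omega|$, and fixed accuracy. The main point needing care is that $|\Omega|$ and the CSP accuracy $\tau/(20s)$ are genuinely independent of $d$. This holds because $\gamma$ was bounded in \Cref{lem:alphabet} by $\poly(\log(1/\delta))$, deterministically whenever the decomposition succeeds. It also matters that $\gamma$ is computed from data but is a priori bounded, so the parameters can be fixed in advance using the upper bound on $\gamma$. \Cref{lem:bounded-amplitude} only needs $d^2\norm{D}_{\max}\le\gamma$, which remains valid when $\gamma$ is replaced by this larger constant.
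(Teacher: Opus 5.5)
Your proposal is correct and follows essentially the same route as the paper's proof. Both run the reduction, solve the dense Max-$4$-CSP to additive accuracy $O(\tau/s)$, renormalize through \Cref{claim:1}, rotate back by $U,V$, and chain the pointwise bounds \Cref{eq:hatM} and \Cref{eq:C-D} with \Cref{lem:dimension-free-discretization}; only the constants differ (the paper takes $\delta=\tau/16$, $\eta=\tau/4$, CSP accuracy $\tau/(8s)$, and defers small $d$ to known methods). Two of your additions make explicit what the paper leaves implicit: the brute-force net for constant $d$, and replacing the data-dependent $\gamma$ by its a priori bound $4c_{\rm flat}(1+\log(1/\delta))^2$ so that $\Delta$, $s$ and $|\Omega|$ are fixed in advance.
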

\begin{proof}
Fix some $0<\tau \leq1$. Set $\delta:=\frac{\tau}{16}$,  $\eta:=\frac{\tau}{4}$, and $\ell$ such that $3^{-\ell} + \delta^2 \leq \frac13$. We assume without loss of generality that $d\ge \max\{\Delta,\frac{1024}{\tau^2}\}$, as otherwise the dimension is constant, and therefore the problem can be solved using known methods, see e.g., \cite{Brandao2011Quasipolynomial}. 

The algorithm proceeds as follows:
\begin{itemize}
    \item[(i)] Run the reduction as specified above with parameters $\eta$ and $\delta$, and let $\mathcal{I}$ be the resulting dense Max-$4$-CSP instance. If the reduction aborts, return an arbitrary pair of unit vectors.
    \item[(ii)] Run $\ell$ independent instances of the algorithm from \Cref{thm:akk-dense-csp} with accuracy parameter $\frac{\tau}{8s}$ and let $z = (x,y)$ be the assignment with the highest value.
    \item[(iii)] Define the \(\sqrt d\)-normalized vectors
\[
 \tilde{x}:=
 \begin{cases}
 \sqrt d\,x/\norm{x}_2, & x\neq 0,\\
 \sqrt d\,u_1, & x=0,
 \end{cases}
 \quad\text{and}\quad
 \tilde{y}:=
 \begin{cases}
 \sqrt d\,y/\norm{y}_2, & y\neq 0,\\
 \sqrt d\,u_1, & y=0.
 \end{cases}
\]
\item[(iv)] The algorithm returns the unit vectors
\[
 \hat x:=\frac{U\tilde{x}}{\sqrt d}
 \quad\text{and}\quad
 \hat y:=\frac{V\tilde{y}}{\sqrt d}.
\]
\end{itemize}
First, notice that the probability that the flat/tail decomposition fails is bounded by $\delta^2$, by \Cref{lem:trunc-spread}. Moreover, the probability that the $z$ returned by the algorithm from \Cref{thm:akk-dense-csp} is not $\frac{\tau}{8s}$-close to the optimum is bounded by $3^{-\ell}$. Thus, with probability at least $\frac23$, the reduction does not abort and the guarantee of \Cref{thm:akk-dense-csp} applies, and we henceforth assume that this is the case. \Cref{lem:01} and \Cref{lem:alphabet} establish that the output of the reduction is a valid dense Max-\(4\)-CSP, with alphabet size
\[
 \abs{\Omega}=\poly(1/\eta,\log(1/\delta))=\poly(1/\tau).
\]
Since $s=\poly(1/\tau)$ as well, the overall runtime is polynomial in $d$. 

What is left to be shown is that the output of the algorithm indeed approximates $h_{\Sep}(M)$. First notice that
\[
\begin{aligned}
 \alpha + Q_{D}(\tilde{x},\tilde{y})
 &\ge \alpha + 
 \Lambda_{D}(x,y)
 -
 \frac{\gamma^2+\gamma^2 \kappa^4}{4\lambda}\\
 &\ge \alpha 
 + 2s\left(\OPT(\mathcal I)-\frac12\right)-\frac{\tau}{4} - \frac{\gamma^2+\gamma^2 \kappa^4}{4\lambda}\\
 &\ge h_{\Sep}(M)-\delta - \frac{2}{\sqrt{d}}-\eta-\frac{\tau}{4} - \frac{\gamma^2+\gamma^2 \kappa^4}{4\lambda}
\end{aligned}
\]
where the first inequality is \Cref{claim:1}, the second inequality is by \Cref{lem:lambda} along with the fact that $\val_{\mathcal I}(z)\ge \OPT(\mathcal I)-\frac{\tau}{8s}$ by assumption, and the third inequality is by \Cref{lem:hSep-bound}.
This implies that 
\begin{align*}
    (\hat x\ot \hat y)^*M(\hat x \ot \hat y)
 &=
 (\tilde{x}/\sqrt d\ot\tilde{y}/\sqrt d)^*\hat M
 (\tilde{x}/\sqrt d\ot\tilde{y}/\sqrt d) \\
 &\ge
 (\tilde{x}/\sqrt d\ot\tilde{y}/\sqrt d)^*(\alpha \Id \otimes \Id + D)
 (\tilde{x}/\sqrt d\ot\tilde{y}/\sqrt d)
 -\delta - \frac{2}{\sqrt{d}} \\
 &= \alpha+
 Q_{D}(\tilde{x},\tilde{y}) -\delta - \frac{2}{\sqrt{d}}\\
 &\ge
 h_{\Sep}(M)
 -2\delta-\eta-\frac{\tau}{4}-\frac{4}{\sqrt{d}}
 -\frac{\gamma^2+\gamma^2 \kappa^4}{4\lambda}
\end{align*}
where the second line uses \Cref{eq:hatM} and \Cref{eq:C-D}. With our choice of $\lambda$ in \Cref{lem:dimension-free-discretization}, we obtain
\[
(\hat x \ot \hat y)^*M(\hat x \ot \hat y) \geq h_{\Sep}(M) - 2\delta-\eta-\frac{\tau}{4}-\frac{4}{\sqrt{d}}-\eta \ge 
h_{\Sep}(M) - \tau
\]
as desired.
\end{proof}

\subsection{From Optimization to Weak Membership}\label{sec:wmemb}
\label{sec:opt-to-membership}

We show how to use the algorithm from \Cref{thm:final} to construct a solver for the weak membership problem for $\Sep(d,d)$. This follows as a standard application and analysis of the Frank-Wolfe algorithm and more generally as a consequence of existing works, such as \cite{Jaggi2013FrankWolfe}. We report the proof here for the sake of completeness.

The algorithm takes as input a Hermitian $\rho \in L(\C^d \ot \C^d)$ with $\norm{\rho}_{\Frob} \leq 1$ and an accuracy parameter $0<\eta\le 1$.
Start from an arbitrary separable state, e.g., $\sigma_1 := (u_1\ot u_1)(u_1\ot u_1)^*$ and set $T:=\frac{256}{\eta^2}$ and $\ell$ such that $T3^{-\ell}\leq \frac13$. Then for $t = 1,\dots, T-1$ proceed as follows:
\begin{itemize}
    \item[(i)] If $\sigma_t-\rho = 0$, accept.
    \item[(ii)] Otherwise run $\ell$ independent instances the algorithm from \Cref{thm:final} with accuracy parameter $\tau:= \frac{\eta^2}{64}$ on input $\frac{\rho-\sigma_t}{\norm{\rho-\sigma_t}_{\Frob}}$. Define $\omega_t\in\Sep(d,d)$ to be the output of the algorithm with the observed objective value. 
    \item[(iii)] Update $\sigma_{t+1} := \left(1-\frac{2}{t+2}\right)\sigma_t +\frac{2}{t+2}\omega_t$.
\end{itemize}
Accept if $\norm{\rho-\sigma_T}_{\Frob}\le \eta/2$ and reject otherwise.

By a union bound, the probability that any state $\omega_t$ does not satisfy the condition from \Cref{thm:final} is bounded by $T3^{-\ell}$, so we henceforth assume that this is the case. Moreover, each $\sigma_t \in\Sep(d,d)$ since $\Sep(d,d)$ is a convex set. To prove convergence, define the potential function
\[
f_\rho(\sigma) := \frac12 \norm{\sigma -\rho}^2_{\Frob} -\min_{\psi\in\Sep(d,d)}\frac12 \norm{\psi -\rho}^2_{\Frob}.
\]
We prove the following recursive relation.
\begin{lemma}\label{lem:recursion}
    $ f_\rho(\sigma_{t+1})-2\tau
 \le
 \left(1-\frac{2}{t+2}\right)(f_\rho(\sigma_{t})-2\tau)+2\left(\frac{2}{t+2}\right)^2$, for all $1\leq t\leq T-1$.
\end{lemma}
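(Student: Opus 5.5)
This is the standard Frank--Wolfe descent inequality with an approximate linear minimization oracle. We combine three facts: the quadratic expansion of $g(\sigma):=\frac12\norm{\sigma-\rho}_{\Frob}^2$, the near-optimality of $\omega_t$ guaranteed by \Cref{thm:final}, and convexity of $g$ on $\Sep(d,d)$.

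Write $\gamma_t:=\frac{2}{t+2}$ and $\sigma_{t+1}-\sigma_t=\gamma_t(\omega_t-\sigma_t)$. Since $g$ is quadratic, we have exactly
\[
g(\sigma_{t+1})=g(\sigma_t)+\gamma_t\,\mathrm{Re}\,\ip{\sigma_t-\rho}{\omega_t-\sigma_t}+\frac{\gamma_t^2}{2}\norm{\omega_t-\sigma_t}_{\Frob}^2 .
\]
The curvature term is bounded by $\norm{\omega_t-\sigma_t}_{\Frob}\le\norm{\omega_t}_{\Frob}+\norm{\sigma_t}_{\Frob}\le 2$, because states have Frobenius norm at most $1$. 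This gives $\frac{\gamma_t^2}{2}\cdot 4=2\gamma_t^2$, which is exactly the last term of the recursion.

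For the linear term, apply \Cref{thm:final} with $M:=(\rho-\sigma_t)/\norm{\rho-\sigma_t}_{\Frob}$. This $M$ is Hermitian with unit Frobenius norm, and we may assume it is well defined, since otherwise the algorithm has already accepted. The product state $\omega_t$ then satisfies $\Tr(M\omega_t)\ge h_{\Sep}(M)-\tau\ge \Tr(M\psi)-\tau$ for every $\psi\in\Sep(d,d)$. Let $\psi^\star$ be a minimizer of $g$ over $\Sep$, and multiply through by $\norm{\rho-\sigma_t}_{\Frob}$. This yields
\[
\ip{\sigma_t-\rho}{\omega_t-\sigma_t}\le \ip{\sigma_t-\rho}{\psi^\star-\sigma_t}+\tau\norm{\rho-\sigma_t}_{\Frob}.
\]
Here $\norm{\rho-\sigma_t}_{\Frob}\le 2$, since $\norm{\rho}_{\Frob}\le1$, so the error is at most $2\tau$. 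All inner products are real because the operators are Hermitian.

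By convexity of $g$, $\ip{\sigma_t-\rho}{\psi^\star-\sigma_t}\le g(\psi^\star)-g(\sigma_t)=-f_\rho(\sigma_t)$. Substituting, and subtracting $g(\psi^\star)$ from both sides, gives
\[
f_\rho(\sigma_{t+1})\le (1-\gamma_t)f_\rho(\sigma_t)+2\gamma_t\tau+2\gamma_t^2 .
\]
Subtracting $2\tau$ from both sides and using $2\gamma_t\tau-2\tau=-(1-\gamma_t)2\tau$ produces exactly the claimed recursion.

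The only step needing care is the conversion of the additive error $\tau$ of \Cref{thm:final}, which is stated for the normalized $M$, into an error for the unnormalized gradient. This is where the bound $\norm{\rho-\sigma_t}_{\Frob}\le 2$, and hence the constant $2\tau$, enters. We also rely on the assumption, already made in the text, that every oracle call succeeded.
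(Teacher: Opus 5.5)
Your proposal is correct and follows essentially the same route as the paper. Both arguments use the exact quadratic expansion with curvature term bounded by $2\gamma_t^2$, rescale the oracle guarantee of \Cref{thm:final} by $\norm{\rho-\sigma_t}_{\Frob}\le 2$ to get the $2\tau$ error, apply first-order convexity to obtain $-f_\rho(\sigma_t)$, and then subtract $2\tau$; naming the minimizer $\psi^\star$ explicitly just spells out the paper's ``by linearity and convexity'' step.
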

\begin{proof}
    By \Cref{thm:final}, the output of the optimization algorithm satisfies
    \[
    \Tr\left(\frac{\rho-\sigma_t}{\norm{\rho-\sigma_t}_{\Frob}} \omega_t\right) \geq h_{\Sep}\left(\frac{\rho-\sigma_t}{\norm{\rho-\sigma_t}_{\Frob}}\right)-\tau.
    \]
Equivalently,
\[
 \ip{\sigma_t-\rho}{\omega_t}
 \le
 \min_{\psi\in\Sep(d,d)}\ip{\sigma_t-\rho}{\psi}
 + \tau \norm{\sigma_t-\rho}_{\Frob} \le \min_{\psi\in\Sep(d,d)}\ip{\sigma_t-\rho}{\psi}
 + 2\tau 
\]
since $\norm{\sigma_t-\rho}_{\Frob} \leq 2$. By linearity and convexity,
\[
\ip{\sigma_t-\rho}{\omega_t -\sigma_t}
 \le \min_{\psi\in\Sep(d,d)}\ip{\sigma_t-\rho}{\psi-\sigma_t}
 + 2\tau \leq -f_\rho(\sigma_t) +2\tau.
\]
Using the identity $\norm{A+B}_{\Frob}^2
=
\norm{A}_{\Frob}^2+2\ip{A}{B}+\norm{B}_{\Frob}^2$, we can expand
\begin{align*}
\frac12\norm{\sigma_{t+1}-\rho}_{\Frob}^2 &=\frac12\norm{\sigma_t+\frac{2}{t+2}(\omega_t-\sigma_t)-\rho}_{\Frob}^2\\
&=
\frac12\norm{\sigma_t-\rho}_{\Frob}^2
+\frac{2}{t+2}\ip{\sigma_t-\rho}{\omega_t-\sigma_t}
+\frac{1}{2}\left(\frac{2}{t+2}\right)^2\norm{\omega_t-\sigma_t}_{\Frob}^2\\
&\le \frac12\norm{\sigma_t-\rho}_{\Frob}^2
+\frac{2}{t+2}\ip{\sigma_t-\rho}{\omega_t-\sigma_t}+ 2\left(\frac{2}{t+2}\right)^2.
\end{align*}
Combining the two inequalities we obtain
\[
f_\rho(\sigma_{t+1}) \leq \left(1- \frac{2}{t+2}\right)f_\rho(\sigma_t) + \frac{4\tau}{t+2} + 2\left(\frac{2}{t+2}\right)^2
\]
and the lemma statement follows by subtracting $2\tau$ from both sides.
\end{proof}
We claim that $f_\rho(\sigma_{t})-2\tau \le \frac{8}{t+2}$ and we show it with an induction. The base case is immediate, whereas for the inductive step
\[
f_\rho(\sigma_{t+1})-2\tau\le \frac{t}{t+2}(f_\rho(\sigma_{t})-2\tau)+\frac{8}{(t+2)^2}
\le \frac{t}{t+2}\cdot\frac{8}{t+2}+\frac{8}{(t+2)^2}
=\frac{8(t+1)}{(t+2)^2}
\le \frac{8}{(t+1)+2}
\]
by \Cref{lem:recursion}. For $T=\frac{256}{\eta^2}$, we obtain that
\[
f_\rho(\sigma_{T})\le 2\tau + \frac{8}{T+2} \leq \frac{\eta^2}{32} +\frac{\eta^2}{32} =\frac{\eta^2}{16}.
\]
Now observe that if $\rho\in\Sep(d,d)$, then $\min_{\psi\in\Sep(d,d)}\frac12 \norm{\psi -\rho}^2_{\Frob} = 0$, therefore
\[
 \norm{\rho-\sigma_T}_{\Frob}
 =
 \sqrt{2f_\rho(\sigma_T)}
 \le
 \frac{\eta}{2}
\]
so the algorithm correctly accepts after $T$ iterations. On the other hand, if $\min_{\psi\in\Sep(d,d)} \norm{\rho-\psi}_{\Frob} > \eta$ then the algorithm always rejects since $\sigma_T \in \Sep(d,d)$. Thus we have the following implication.

\begin{corollary}
\label{cor:constant-gap-sepmem}
For any fixed $0 < \eta \le 1$, there is a randomized algorithm that solves the weak membership problem for \(\Sep(d,d)\) with gap \(\eta\), in time polynomial in $d$.
\end{corollary}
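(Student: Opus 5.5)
The proof will run the Frank--Wolfe procedure described just above, calling the optimization algorithm of \Cref{thm:final} as a linear-minimization oracle, and will analyse it through the potential $f_\rho$.

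\textbf{Input normalization.} A density matrix has $\norm{\rho}_{\Frob}\le\Tr\rho=1$, so every instance satisfies the input hypothesis. In step (ii), the matrix $(\rho-\sigma_t)/\norm{\rho-\sigma_t}_{\Frob}$ is Hermitian with unit Frobenius norm, so \Cref{thm:final} applies to it.

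\textbf{Converting the oracle output to a separable state.} The vectors $\hat x,\hat y$ returned by \Cref{thm:final} give the product state $\omega_t:=(\hat x\hat x^*)\ot(\hat y\hat y^*)\in\Sep(d,d)$.

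\textbf{Boosting the oracle.} Among the $\ell$ independent runs, I select the output with the largest observed value of $\Tr(\cdot\,\omega)$. This value can be computed exactly, so the selected output is at least as good as any successful run. Each run fails with probability at most $1/3$, so all $\ell$ runs fail with probability at most $3^{-\ell}$. A union bound over the $T$ iterations then shows that every oracle call succeeds, except with probability $T3^{-\ell}\le 1/3$.

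\textbf{Convergence.} On this success event, \Cref{lem:recursion} holds at every step. Induction from that recursion gives
\[
f_\rho(\sigma_T)-2\tau\le \frac{8}{T+2}.
\]
The base case needs $f_\rho(\sigma_1)-2\tau\le 8/3$. This holds because $f_\rho(\sigma_1)\le\frac12\norm{\sigma_1-\rho}_{\Frob}^2\le 2$. With $\tau=\eta^2/64$ and $T=256/\eta^2$, this yields $f_\rho(\sigma_T)\le\eta^2/16$.

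\textbf{Completeness.} If $\rho\in\Sep(d,d)$, the minimum in the definition of $f_\rho$ is $0$. Hence $\norm{\rho-\sigma_T}_{\Frob}\le\eta/2$, and the algorithm accepts with probability at least $2/3$. An early exit at step (i) also accepts, which is consistent with this.

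\textbf{Soundness.} Every $\sigma_t$ is a convex combination of separable states, so $\sigma_T\in\Sep(d,d)$. If $\rho$ is $\eta$-far from $\Sep(d,d)$, then $\norm{\rho-\sigma_T}_{\Frob}>\eta>\eta/2$, so the algorithm rejects deterministically. Step (i) can never trigger in this case, because $\sigma_t\neq\rho$.

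\textbf{Running time.} The algorithm makes $T\ell=O(\eta^{-2}\log(1/\eta))$ calls to \Cref{thm:final}, each with fixed accuracy $\tau=\eta^2/64$, so each call runs in time polynomial in $d$. The remaining work is polynomial-time matrix arithmetic. For fixed $\eta$ the total is therefore polynomial in $d$.

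\textbf{Main obstacle.} I expect the delicate point to be the oracle step. One must make sure that the additive error $\tau$ of \Cref{thm:final}, which is stated for the unit-normalized matrix, rescales to an error of at most $2\tau$ in the unnormalized linear objective. This uses $\norm{\sigma_t-\rho}_{\Frob}\le 2$. One must also confirm that the chosen $\tau$ does not change with $d$, so that the cost of each call stays polynomial.
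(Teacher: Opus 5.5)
Your proposal is correct and follows the paper's argument essentially step for step: Frank--Wolfe with \Cref{thm:final} as an approximate linear-minimization oracle, the potential $f_\rho$, \Cref{lem:recursion}, the induction $f_\rho(\sigma_t)-2\tau\le 8/(t+2)$ with $\tau=\eta^2/64$ and $T=256/\eta^2$, and the same completeness and soundness argument. The details you add are correct and only make explicit what the paper leaves implicit: that density matrices satisfy $\norm{\rho}_{\Frob}\le 1$, the base case $f_\rho(\sigma_1)\le 2\le 8/3$, and why choosing the best of $\ell$ runs by observed value is sound.
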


\subsection*{Acknowledgements} 
The author wishes to thank Michael Walter for guidance, for suggesting the proof that a weak membership algorithm for $\Sep(d,d)$ implies one for $\Sep(d_0,d_1)$, and for general feedback on this manuscript.

Work supported by the European Research Council through an ERC Starting Grant (Grant agreement No.~101077455, ObfusQation) and by the Deutsche Forschungsgemeinschaft (DFG, German Research Foundation) under Germany's Excellence Strategy - EXC 2092 CASA – 390781972.

\subsection*{Statement on AI Use}
The main idea for this proof was generated biologically, but ChatGPT 5.5 and 5.6 assisted on the proofs of all technical statements involved in the reduction. The final version of this manuscript is written by the author, who takes full responsibility for its content.

\bibliographystyle{alpha}
\bibliography{references}

\appendix
\section{Bit Complexity}\label{sec:bit}
We sketch how to prove the same theorem in the standard finite-precision Turing model of computation. Assume that \(M\) has rational entries. Run the proof of
\Cref{thm:final} with accuracy \(\tau/2\), choosing \(\ell\) so that its
good events hold with probability at least \(5/6\), and set
$r:=\min\{d^{-2},\tau/100\}$.

By, e.g., \cite[Proposition C.9]{banks2023pseudospectral}, using \(O(\log d+\log(1/r))\) bits of precision, produces
\(\tilde U\) which can be coupled to a Haar unitary \(U\) so that
\[
 \norm{\tilde U-U}_{\mathrm{op}}\le r/100
\]
except with probability \(1/12\). Apply this independently to obtain
\(\tilde U,\tilde V\), coupled to independent Haar unitaries \(U,V\),
and define
\[
 \hat C=(U\otimes V)^*C(U\otimes V)\quad \text{and} \quad
 \tilde C=\Pi\!\left((\tilde U\otimes\tilde V)^*
 C(\tilde U\otimes\tilde V)\right).
\]
Since \(\norm{C}_{\Frob}\le1\) and \(\Pi\) is contractive, expanding the
conjugations gives $\norm*{\tilde C-\hat C}_{\Frob}\le r$.

In step~(iii), threshold \(\tilde C\) at the cutoff in the proof of
\Cref{lem:trunc-spread}, increased by \(r\), and abort only if the new tail
has Frobenius norm greater than \(\delta+r\). On the good Haar event, every
entry in the new tail belongs to the original tail, and hence
\[
 \norm{\tilde C_{\rm tail}}_{\Frob}\le\delta+r,\qquad
 \norm{\tilde C_{\rm flat}}_{\rm max}
 \le c_{\rm flat}\frac{(1+\log(1/\delta))^2}{d^2}+r.
\]
Thus, for \(D:=\Pi(\tilde C_{\rm flat})\),
\[
 d^2\norm{D}_{\rm max}
 \le4c_{\rm flat}(1+\log(1/\delta))^2+4,
 \qquad
 \norm{\hat C-D}_{\Frob} \leq \norm{\tilde C-\hat C}_{\Frob}+ \norm{\tilde C_{\rm tail}}_{\Frob}\le\delta+2r.
\]
The discretization parameters therefore remain \(\poly(1/\tau)\), while
\[
 \abs{h_{\Sep}(\hat C)-h_{\Sep}(D)}
 \le\norm{\hat C-D}_{\Frob}\le\delta+2r.
\]
Consequently, the two uses of the bound \(\delta\) in the proof of
\Cref{thm:final} cost at most \(4r\) more.

Finally, return
\[
 \hat x=\frac{\tilde U\tilde x}{\norm{\tilde U\tilde x}_2},
 \qquad
 \hat y=\frac{\tilde V\tilde y}{\norm{\tilde V\tilde y}_2}.
\]
Each vector is within \(r/50\) of its exactly rotated counterpart, so
\(\norm{M}_{\mathrm{op}}\le1\) implies an additional objective error smaller
than \(r\). The total error is therefore at most
\(\tau/2+5r<\tau\), and the success probability is at least
\(5/6-2(1/12)=2/3\).

Regarding the CSP, use rational upper
bounds for all scalar parameters and a rational net \(\mathcal E\), and
perform the remaining arithmetic exactly. Since
\(\log(1/r)=O(\log d+\log(1/\tau))\), this has polynomial bit complexity.

\end{document}